\newtheorem{theorem}{Theorem}
\newtheorem{lemma}{Lemma}
\newcommand{\paren}[1]{\left(#1\right)}
\newcommand{\bracket}[1]{\left[#1\right]}
\renewcommand{\brace}[1]{\left\{#1\right\}}
\newcommand{\ceil}[1]{\left\lceil#1\right\rceil}
\newcommand{\F}{\mathbb{F}}
\newcommand{\R}{\mathbb{R}}
\newcommand{\M}[1]{\begin{bmatrix}#1\end{bmatrix}}
\newcommand{\MA}[2]{\bracket{\begin{array}{#1}#2\end{array}}}
\newcommand{\casew}[1]{\begin{cases}#1&\mathrm{else}\end{cases}}
\newcommand{\poly}{\mathrm{poly}}
\newcommand{\rank}[1]{\mathrm{rank}\paren{#1}}
\newcommand{\rowspan}[1]{\mathrm{rowspan}\paren{#1}}
\newcommand{\clmspan}[1]{\mathrm{clmspan}\paren{#1}}
\newcommand{\nullspace}[1]{\mathrm{nullspace}\paren{#1}}
\renewcommand{\dim}[1]{\mathrm{dim}\paren{#1}}
\newcommand{\rref}[1]{\mathrm{rref}\paren{#1}}
\newcommand{\GL}[2]{\mathrm{GL}\paren{#1,\ #2}}
\renewcommand{\~}[1]{\widetilde{#1}}
\renewcommand{\vec}[1]{\overrightarrow{#1}}
\renewcommand{\Vec}[1]{\mathrm{vec}\paren{#1}}
\renewcommand{\span}[1]{\mathrm{span}\paren{#1}}
\renewcommand{\min}[1]{\mathrm{min}\paren{#1}}
\renewcommand{\max}[1]{\mathrm{max}\paren{#1}}
\newcommand{\T}{\mathcal{T}}
\newcommand{\fexp}[2]{\mathcal{F}\paren{#1,\ #2}}
\newcommand{\pexp}[3]{\mathcal{P}\paren{#1,\ #2,\ #3}}
\title{Fixed-parameter tractability of canonical polyadic decomposition over finite fields}
\author{Jason Yang}
\date{}
\begin{document}

\maketitle

\begin{abstract}
We present a simple proof that finding a rank-$R$ canonical polyadic decomposition of a 3-dimensional tensor over a finite field $\F$ is fixed-parameter tractable with respect to $R$ and $\F$. We also show a nontrivial upper bound on the time complexity of this problem.
\end{abstract}

\section{Introduction}
Given a $n_0\times n_1\times n_2$ tensor (multidimensional array) $\T$, a rank-$R$ canonical polyadic decomposition (CPD) is a triplet of \textit{factor matrices} $A,B,C$ with shapes $R\times n_0,\ R\times n_1,\ R\times n_2$ respectively, such that (s.t.)

\[\forall 0\le i<n_0,\ 0\le j<n_1,\ 0\le k<n_2 : \T_{i,j,k}=\sum_{r=0}^{R-1} A_{r,i}B_{r,j}C_{r,k}.\]

This equation can also be written as $\T=\sum_{r=0}^{R-1} A_{r,:}\times B_{r,:}\times C_{r,:}$, where $M_{r,:}$ denotes the $r$-th row of $M$ and $\times$ denotes the outer product (defined in Section \ref{notation}).

CPD, also called parallel factor analysis (PARAFAC), plays a central role in the field of fast matrix multiplication, as finding an asymptotically fast algorithm for this problem is equivalent to finding a low-rank CPD of a specific 3-dimensional tensor \cite{survey}.
Although normally these CPDs should be over $\R$, there has been significant progress in recent years that restricts CPDs to be over a finite field, especially $\F_2$ \cite{alphatensor} \cite{flip} \cite{flip2} \cite{adaflip}, with some of these CPDs having been lifted to the integers (so that they also work over $\R$).
This trend motivates us to restrict our attention to CPDs over finite fields.

Our objective is to solve rank-$R$ CPD for an arbitrary tensor $\T$.
\textit{Solving} means that we must detect whether or not a rank-$R$ CPD of $\T$ exists, and return such a CPD if it exists.
We require an algorithm for this problem to give an exact (not approximate) CPD, to work for all possible $\T$, and to be guaranteed to be correct; to our knowledge, no previous work on CPD \cite{alphatensor} \cite{flip} \cite{smirnov} \cite{simul-diag} \cite{cpdqz} \cite{fpt-approx} satisfies all conditions at once, and most previous work only considers CPDs over $\R$.

We present a simple proof that solving rank-$R$ CPD is fixed-parameter tractable with respect to (w.r.t.) the rank $R$ and ground field $\F$. This result is formally stated as Theorem \ref{main} and proven in Section \ref{fpt}.
Additionally, this is close to the best result we can hope for, since if $R$ is allowed to vary linearly w.r.t. $n_0,n_1,n_2$, then rank-$R$ CPD is NP-Hard over any finite field \cite{hardness}.

\begin{theorem}
\label{main}
For whole number $R$ and finite field $\F$, determining whether a rank-$R$ CPD over $\F$ of a $n_0\times n_1\times n_2$ tensor exists, and constructing such a CPD if it exists, can be done in $O(f(|\F|,R)+\poly(n_0,n_1,n_2,R))$ time for some function $f$ and $O(\poly(n_0,n_1,n_2,R))$ space, where both $\poly(n_0,n_1,n_2,R)$ terms have $O(1)$ degree w.r.t. $|\F|$ and $R$.
\end{theorem}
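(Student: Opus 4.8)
The plan is to reduce the general CPD problem to an instance whose size is bounded purely in terms of $R$ and $|\F|$, solve that bounded instance by brute force (which costs $f(|\F|,R)$ time), and then lift the solution back. The key structural observation is that each factor matrix $A,B,C$ has only $R$ rows, so its row space is a subspace of $\F^{n_0}$ (resp. $\F^{n_1}$, $\F^{n_2}$) of dimension at most $R$. Conversely, if $\T$ has a rank-$R$ CPD, then for each mode the flattening (matricization) of $\T$ along that mode has column space contained in the column space of the corresponding factor matrix's transpose; hence $\rank{\T^{(m)}}\le R$ for each of the three flattenings $\T^{(0)},\T^{(1)},\T^{(2)}$. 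If any flattening has rank exceeding $R$, we immediately report that no decomposition exists. Otherwise, each mode's "relevant" space has dimension $r_m:=\rank{\T^{(m)}}\le R$.

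**The dimension reduction.**

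First I would compute, for each mode $m$, a matrix $U_m$ of shape $r_m\times n_m$ whose rows form a basis of the row space of $\T^{(m)}$ (equivalently, of the column space of the mode-$m$ flattening); this is a $\poly(n_0,n_1,n_2,R)$ linear-algebra computation over $\F$ with $O(1)$ degree dependence on $|\F|$ and $R$. The crucial claim is that $\T$ has a rank-$R$ CPD if and only if the "compressed" tensor $\~\T$ of shape $r_0\times r_1\times r_2$, obtained by contracting $\T$ along each mode with a left-inverse of $U_m$, has a rank-$R$ CPD; and that factor matrices $\~A,\~B,\~C$ for $\~\T$ can be expanded to factor matrices $A=\~A U_0$, $B=\~B U_1$, $C=\~C U_2$ for $\T$. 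The forward direction uses that the rows of the original factor matrices must lie in the respective row spaces (a short argument via the flattening rank bound, possibly needing that we may assume no factor matrix has a zero column pattern forcing rank drop — handled by the rank computation); the backward direction is a direct substitution into the multilinear identity. I expect this equivalence, and carefully proving that the row-space containment is without loss of generality, to be the main obstacle.

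**Solving the bounded instance and assembling the algorithm.**

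Once $\~\T$ has shape at most $R\times R\times R$, a rank-$R$ CPD of it (if one exists) consists of three factor matrices with entries in $\F$, i.e. at most $3R^2$ entries each from a set of size $|\F|$, so there are at most $|\F|^{3R^2}$ candidates; we enumerate all of them, check the defining equation in $O(R^3)$ time each, and succeed iff some candidate works. This brute-force phase runs in $f(|\F|,R):=|\F|^{O(R^2)}\cdot\poly(R)$ time and $\poly(R)$ space. Combining: (1) compute the three flattenings and their ranks; if any exceeds $R$, output "no"; (2) compute bases $U_0,U_1,U_2$ and form $\~\T$; (3) brute-force a rank-$R$ CPD of $\~\T$; (4) if found, output $(\~A U_0,\~B U_1,\~C U_2)$, else output "no". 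Total time $O(f(|\F|,R)+\poly(n_0,n_1,n_2,R))$ and space $O(\poly(n_0,n_1,n_2,R))$, with the polynomial factors of degree $O(1)$ in $|\F|$ and $R$, as required. The nontrivial time bound promised in the abstract presumably comes from a smarter enumeration in step (3) — e.g. fixing one factor matrix up to change of basis and solving for the other two via a system of linear/bilinear constraints — but for Theorem \ref{main} the crude count suffices.
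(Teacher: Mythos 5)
Your proposal follows essentially the same route as the paper: bound the rank of each mode's flattening by $R$ (terminating early if it exceeds $R$), compress each mode to a basis of its slices/fibers to obtain a tensor of shape at most $R\times R\times R$, brute-force that instance in time depending only on $|\F|$ and $R$, and lift the factors back by multiplying with the basis matrices. The paper implements the compression with explicit invertible row-reduction matrices $\Gamma,\Gamma',\Gamma''$ rather than left-inverses of basis matrices, but this is a cosmetic difference.

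One caveat on the step you flag as the main obstacle: the justification you sketch for the forward direction --- that the rows of the original factor matrices $A,B,C$ must lie in the corresponding fiber spans --- is false in general, because rank-one terms can cancel (e.g.\ the zero tensor admits rank-$R$ CPDs $a\times b\times c + a\times b\times(-c)+\dots$ with $a$ completely arbitrary), and no assumption ``without loss of generality'' is needed or available to force such containment. The correct argument is simpler and is the one the paper uses: the compressed tensor is the image of $\T$ under a linear map applied to each mode, so applying that map to both sides of $\T=\sum_{r} A_{r,:}\times B_{r,:}\times C_{r,:}$ immediately yields a rank-$R$ CPD of the compressed tensor whose factor matrices are the original ones multiplied by the compression matrices; left-invertibility (the paper's $\Gamma^{-1}$) is only needed for the lifting direction, which you handle correctly. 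With that one-line repair (and fixing the minor slip that $U_m$ spans the \emph{column} space of the mode-$m$ flattening, a subspace of $\F^{n_m}$, not its row space), your argument goes through and matches the paper's proof of Theorem \ref{main}.
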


We also give a nontrivial upper bound on $f$ in Section \ref{tighter}.

\subsection{Notation}
\label{notation}
\begin{itemize}
    \item We use NumPy notation for indexing on tensors.
    \begin{itemize}
        \item e.g. for a matrix $M$:
        \begin{itemize}
            \item $M_{i,:}$ denotes the $i$-th row (flattened into a 1D list by default, unless it is clear by context that it should not be flattened)
            \item $M_{:i,:}$ denotes the matrix truncated to its first $i$ rows
            \item $M_{:,i}$ denotes the $i$-th column (flattened into a 1D list by default)
        \end{itemize}
    \end{itemize}
    
    \item All tensors and sequences are 0-indexed.
    
    \item The outer product of two arbitrary tensors $X, Y$ is denoted as $X\times Y$ and defined as \[(X\times Y)_{i_0,\dots, i_{k-1}, j_0,\dots, j_{\ell-1}}
    :=X_{i_0,\dots, i_{k-1}}Y_{j_0,\dots, j_{\ell-1}}.\]
    
    \item The reduced row echelon form of a matrix $M$ is denoted as $\rref{M}$.
    
    \item A vector is \textit{canonically normalized} if it contains at least one nonzero element and its first (lowest-indexed) nonzero element is $1$.
    The \textit{canonical normalization} of $v\ne\vec{0}$ is defined as $v$ divided by its first nonzero element.

    \item A vector is \textit{monomial} if it contains exactly one nonzero element.

    \item Asymptotic complexity will sometimes be expressed as $\fexp{n}{k}:=O\paren{\frac{|\F|^n}{(|\F|-1)^k} R^{O(1)}}$, where the exponent on $R$ is $O(1)$ w.r.t. $n, k, |\F|, R$.
    % The exponent on $|\F|$ comes from enumerating sets of variables; the exponent on $(|\F|-1)$ comes from normalization, which is important to keep separate from $|\F|$ because the two terms have very different effects when $|\F|$ is small; and the $R^{O(1)}$ allows us to ignore $\poly(R)$ factors, which are not significant in this scenario (while maintaining fixed-parameter tractability).
\end{itemize}

\section{Proof of fixed-parameter tractability}
\label{fpt}
Given an input tensor $\T\in\F^{n_0\times n_1\times n_2}$, construct a basis of its $i$-slices $\T_{i,:,:}$. Explicitly, we apply row reduction on the flattened tensor $\~{\T}:=\MA{c}{\vdots\\\hline \Vec{\T_{i,:,:}}\\\hline \vdots}_i$, creating a matrix $\Gamma\in\GL{n_0}{\F}$ s.t. $\Gamma\~{\T}=\rref{\~{\T}}$.
Removing the all-zero rows of $\rref{\~{\T}}$ and unflattening yields a $R_0\times n_1\times n_2$ tensor $\T'$, where $R_0=\rank{\~{T}}$, satisfying the property $\forall i':\sum_i \Gamma_{i',i} \T_{i,:,:}=\casew{\T'_{i',:,:}&i'<R_0 \\ O}$.

Using the invertibility of $\Gamma$, one can convert a rank-$R$ CPD for $\T'$ into one for $\T$ and vice versa, so solving rank-$R$ CPD for $\T$ is equivalent to solving rank-$R$ CPD for $\T'$:
\[
\forall i,j,k: \sum_{0\le r<R} A_{r,i}B_{r,j}C_{r,k}=\T_{i,j,k}
\Longrightarrow
\forall i',j,k: \sum_{0\le r<R} \paren{\sum_{0\le i<n_0} \Gamma_{i',i}A_{r,i}}B_{r,j}C_{r,k}=\T'_{i',j,k}
\]
\[
\forall i',j,k: \sum_{0\le r<R} A'_{r,i'}B_{r,j}C_{r,k}=\T'_{i',j,k}
\Longrightarrow
\forall i,j,k: \sum_{0\le r<R} \paren{\sum_{0\le i'<R_0} \paren{\Gamma^{-1}}_{i,i'}A'_{r,i'}}B_{r,j}C_{r,k}=\T_{i,j,k}.
\]

The upshot of this process is that the existence of a rank-$R$ CPD of $\T$ implies that $R_0\le R$,
since having $\T=\sum_{0\le r<R} A_{r,:}\times B_{r,:}\times C_{r,:}$ implies $\forall i: \T_{i,:,:}\in\span{\brace{B_{r,:}\times C_{r,:}}_{0\le r<R}}$.
Thus, if we detect $R_0>R$, we immediately know there is no solution and can terminate.

We can apply this basis extraction along the other two dimensions,
first constructing $\T''\in\F^{R_0\times R_1\times n_2}$ s.t. $\brace{\T''_{:,j',:}}_{j'}$ is a basis of $\brace{\T'_{:,j,:}}_j$,
then constructing $\T'''\in\F^{R_0\times R_1\times R_2}$ s.t. $\brace{\T'''_{:,:,k'}}_{k'}$ is a basis of $\brace{\T''_{:,:,k}}_k$.
Using a similar argument as above, we have $R_1\le R$ and $R_2\le R$ (otherwise we terminate), and solving rank-$R$ CPD of $\T'''$ is equivalent to solving rank-$R$ CPD for $\T$.

Since $\T'''$ has length $\le R$ across each dimension, we can simply use brute force to solve rank-$R$ CPD for it, without spending an amount of time that depends on $n_0,n_1,n_2$.
Doing so takes $\fexp{R(R_0+R_1+R_2)}{0}\in\fexp{3R^2}{0}$ time and $O(R(R_0+R_1+R_2))\in O(R^2)$ space.
% Since a rank-$R$ CPD of $\T'''$ contains $R(R_0+R_1+R_2)\le 3R^2$ many variables, and checking the validity of all equations $\T_{i,j,k}=\sum_r A_{r,i}B_{r,j}C_{r,k}$ takes $O(RR_0R_1R_2)$ time, brute force takes $O(|\F|^{3R^2}R^4)$ time and $O(R^2)$ space, yielding an explicit upper bound on $f(|\F|,R)$ for Theorem \ref{main}.

Finally, constructing $\T',\T'',\T'''$ and converting a CPD for $\T'''$ into one for $\T$ can be done with several row reductions and matrix products, each of which takes polynomial time and space, so we are done.

One can extend this proof to show that rank-$R$ CPD over an arbitrary $D$-dimensional tensor is fixed-parameter tractable w.r.t. the rank $R$, ground field $\F$, and tensor dimension $D$.

\section{Tighter time complexity}
\label{tighter}
The proof in Section \ref{fpt} naturally corresponds to an algorithm for rank-$R$ CPD. Here we present tighter bounds on the time complexity for this algorithm. For simplicity, we sometimes upper-bound $n_0,n_1,n_2$ with $n:=\max{n_0,n_1,n_2}$ and $R_0,R_1,R_2$ with $R$ when calculating asymptotic complexity; we also assume $n\ge R\ge 1$.

\subsection{Pre- and post-processing}
Preprocessing, i.e. constructing $\T',\T'',\T'''$ and their corresponding change-of-basis matrices $\Gamma\in\GL{n_0}{\F}$, $\Gamma'\in\GL{n_1}{\F}$, $\Gamma''\in\GL{n_2}{\F}$, can be done by row-reducing flattenings of $\T$, $\T'$, and $\T''$, which have shapes $n_0\times n_1n_2$, $n_1\times R_0n_2$, and $n_2\times R_0R_1$ respectively.

Instead of na\"ively running row-reduction, we notice that our algorithm only has to run to completion if each of the flattenings of $\T$, $\T'$, and $\T''$ have rank $\le R$. We can thus add early termination to row-reduction to get the following result:

\begin{lemma}
\label{fast-rankfac}
Given a matrix $M\in\F^{m\times n}$, and integer $r'$, we can do the following in $O((\min{r,r'}+1)(m+1)(m+n))$ time, where $r:=\rank{M}$:
\begin{itemize}
    \item If $r\le r'$: return some $C,J\in\GL{m}{\F}$ and $F\in\F^{r\times n}$ s.t. $M=C_{:,:r}F$ and $J=C^{-1}$.
    \item Else: return nothing
\end{itemize}
\end{lemma}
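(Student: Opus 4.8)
The plan is to run Gauss--Jordan elimination on $M$ one pivot at a time, maintaining the accumulated elementary row operations as a matrix, and to abort as soon as it becomes clear that $\rank{M} > r'$. I would track an invertible $J\in\GL{m}{\F}$ (initialized to $I_m$) such that at every stage $JM$ equals the partially row-reduced version of $M$; $C$ will be recovered as $J^{-1}$ at the end. Concretely, I would process the columns of $M$ left to right; when a column is found that has a nonzero entry below all previously placed pivots, I swap that row up, scale it, and eliminate that column's entries in all other rows, applying each of these operations to $J$ as well. I keep a counter of how many pivots have been placed so far.

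The early-termination hook is the key efficiency point: after each new pivot is placed, the pivot count increases by one, and if it ever exceeds $r'$ — equivalently, the moment we are about to place pivot number $r'+1$ — we know $r = \rank{M} > r'$, so we stop and return nothing. Conversely, if elimination runs out of columns while having placed $r \le r'$ pivots, then $\rref{M}$ has exactly $r$ nonzero rows; I set $F$ to be those $r$ nonzero rows of $JM$, and I set $C := J^{-1}$, computed by one $m\times m$ matrix inversion (or by inverting the recorded sequence of elementary operations). Then $JM = \rref{M}$ has its nonzero rows equal to $F$ and its remaining $m-r$ rows zero, which unpacks to $M = C_{:,:r}F$ since $C_{:,:r}$ picks out the first $r$ columns of $J^{-1}$; and $J = C^{-1}$ by construction. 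I should also handle the degenerate inputs (e.g. $m=0$ or $M=O$, where $r=0$) so that the claimed $C,J,F$ still make sense, which is why the bound carries the $(m+1)$ and $(\min{r,r'}+1)$ factors rather than $m$ and $\min{r,r'}$.

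For the running time: each pivot step touches one column search (cost $O(m)$ to scan the current column, or we can fold this into the elimination pass) plus a full elimination sweep that updates the $m\times n$ working copy of $M$ and the $m\times m$ matrix $J$, i.e. $O(m(m+n))$ field operations per pivot. We perform at most $\min{r,r'}+1$ pivot steps before either finishing or aborting (the $+1$ because we may begin the $(r'+1)$-st step before detecting failure, and because even $r=0$ requires a constant amount of bookkeeping). Multiplying gives $O((\min{r,r'}+1)(m)(m+n))$ for the elimination phase; the final inversion $J^{-1}$ is one more $O(m^3) \subseteq O(m \cdot m(m+n))$ step, and column scans over all columns that never yield a pivot contribute $O(m \cdot n)$ total, all of which is absorbed into $O((\min{r,r'}+1)(m+1)(m+n))$. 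I do not expect any genuine obstacle here; the only things to be careful about are (i) making sure the abort happens strictly before doing a full sweep's worth of extra work once the pivot budget is exhausted, so the $\min{r,r'}$ (rather than $r$) bound is honest, and (ii) correctly identifying $C_{:,:r}$ with the first $r$ columns of $C=J^{-1}$ so that the factorization $M = C_{:,:r}F$ drops out of $JM = \rref{M}$ with no further computation.
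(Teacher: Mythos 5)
Your algorithm is essentially the paper's: Gaussian elimination processed pivot by pivot, maintaining an invertible $J$ with $JM$ equal to the partially reduced matrix, aborting as soon as the pivot count exceeds $r'$, and reading off $F$ as the $r$ nonzero rows of $\rref{M}$ so that $M=C_{:,:r}F$ follows from $M=J^{-1}\rref{M}$. The one genuine problem is how you obtain $C=J^{-1}$. Your primary suggestion, a single generic $m\times m$ inversion at the end, costs $\Theta(m^3)$, and this is \emph{not} absorbed into the claimed bound $O((\min{r,r'}+1)(m+1)(m+n))$: with, say, $r'=1$ and $n=\Theta(m)$ the budget is $\Theta(m^2)$ while the inversion is $\Theta(m^3)$. (Your inclusion $O(m^3)\subseteq O(m\cdot m(m+n))$ is true but beside the point, since the factor $m$ there is not $\min{r,r'}+1$.)

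The fix is the alternative you mention only in passing, and it is exactly what the paper does: never invert anything at the end, but maintain $C$ incrementally alongside $J$, preserving the invariants $CJ=I_m$ and $JM=F$. Each elementary row operation applied to $J$ and to the working copy of $M$ is matched by the inverse operation applied to the \emph{columns} of $C$: swapping rows $i$ and $q$ of $J$ pairs with swapping columns $i$ and $q$ of $C$; dividing row $q$ of $J$ by $\sigma$ pairs with multiplying column $q$ of $C$ by $\sigma$; and subtracting $s$ times row $q$ from row $k$ of $J$ pairs with adding $s$ times column $k$ to column $q$ of $C$. This costs $O(m)$ extra per operation, i.e. $O((\min{r,r'}+1)m^2)$ in total, which fits the stated bound; your ``replay the recorded elementary operations'' variant has the same cost and is equally acceptable. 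With that replacement the argument is complete; the remaining deviations from the paper --- full Gauss--Jordan elimination of each pivot column at pivot time versus the paper's forward pass followed by a back-substitution pass, and aborting after rather than before the $(r'+1)$-st pivot sweep --- only affect constants and are covered by the $(\min{r,r'}+1)$ and $(m+1)$ factors.
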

We give a proof in Appendix \ref{fast-rankfac-proof}. Using this result allows preprocessing to be done in $O((R+1)n^3)$ time.

Postprocessing, i.e. converting a rank-$R$ CPD of $\T'''$ (if it exists) to a rank-$R$ of $\T$, involves multiplying each factor matrix of the CPD of $\T'''$ with $\le R$ many columns of the corresponding change-of-basis matrix (e.g. when converting from $\T'$ to $\T$ in the previous section, we calculate $A=A'(\Gamma^{-1}_{:,:R_0})^\intercal$), so this step takes $O(R^2n)$ time.

Thus, we can replace the $\poly(n_0,n_1,n_2,R)$ term in Theorem \ref{main} with $O((R+1)n^3)$; for fixed $R$, this is linear w.r.t. the total input size (i.e. the number of elements in $\T$) if $\T$ has shape $n\times n\times n$.

\subsection{CPD solving}
The bottleneck of our algorithm is solving rank-$R$ CPD $\T'''=\sum_r A_{r,:}\times B_{r,:}\times C_{r,:}$, where $\T'''\in\F^{R_0\times R_1\times R_2}$.

We can improve over brute force by only fixing $A$ and $B$, and then solving for $C$ via a linear system.
WLOG we can also force each row of $A$ and $B$ to be canonically normalized, due to the scale-invariance $a\times b\times c=(\alpha a)\times (\beta b)\times (\frac{1}{\alpha\beta} c)$, yielding running time $\fexp{R(R_0+R_1)}{2R}\in\fexp{2R^2}{2R}$ \footnote{We can also force pairs of row-vectors $(A_{r,:},B_{r,:})$ to be lexicographically nondecreasing w.r.t. $r$, but this improves running time by at most a factor of $R!$, which is independent of $|\F|$, so we ignore this result. In practice this optimization is important.}.

In the rest of this section, we obtain a slightly lower time complexity by fixing $A$ and solving $B, C$ in aggregate.
WLOG we can force each row of $A$ to be canonically normalized \footnote{Similarly, we can also force rows $A_{r,:}$ to be lexicographically nondecreasing w.r.t. $r$, but we ignore this speedup.}.

By defining $M_r:=B_{r,:}\times C_{r,:}$, our CPD equation is equivalent to solving the system $\forall i:\sum_r A_{r,i}M_r=\T'''_{i,:,:}$ for matrices $M_r\in\F^{R_1\times R_2}$ under the constraint that $\forall r:\rank{M_r}\le 1$ \footnote{After finding all $M_r$, a valid assignment for $B$ and $C$ can be found using matrix rank factorization.}.
By abuse of notation we abbreviate this system as $A^\intercal \MA{c}{\vdots \\ M_r \\ \vdots}_r = \MA{c}{\vdots \\ \T'''_{i,:,:} \\ \vdots}_i$, where the $M_r$ and $\T'''_{i,:,:}$ are not concatenated, just grouped into column vectors (notice the absence of horizontal bars).

\subsubsection{Row reduction}
Performing row operations on this system is equivalent to left-multiplying both sides of the system by an invertible matrix $S\in\GL{R_0}{\F}$, and doing so does not change whether or not a solution exists. Formally, left-multiplying by $S$ transforms the system into $(SA^\intercal)\MA{c}{\vdots \\ M_r \\ \vdots}_r=\MA{c}{\vdots \\ D_i \\ \vdots}_i$, where $D_i:=\sum_j S_{i,j}\T'''_{j,:,:}$.

We want to choose $S$ s.t. $SA^\intercal$ has a lot of monomial columns \footnote{Since every row of $A$ to be nonzero and $S$ is invertible, every column of $SA^\intercal$ must contain at least one nonzero element, so we can ignore all-zeros columns.}; this is because when we fix each $M_r$ s.t. $(SA^\intercal)_{:,r}$ is \textit{not} monomial, the system reduces to a list of independent equations,
% $M_r=D'_i$ for some matrices $D'_i\in\F^{R_1\times R_2}$ over some set of index pairs $(r,i)$,
allowing all other $M_r$ to be trivially solved (or the nonexistence of a solution to be trivially detected).

For arbitrary $S$, consider a subset of column indices $j_0,\dots,j_{K-1}$ s.t. $\forall i:(SA^\intercal)_{:,j_i}$ is monomial.
Because left-multiplying by $S$ is a linear transformation over $\F^{R_0}$, we have that if the nonzero elements of columns $(SA^\intercal)_{:,j_i}$ are the same, then all columns $A^\intercal_{:,j_i}$ must be scalar multiples of a common vector; and if if the nonzero elements of columns $(SA^\intercal)_{:,j_i}$ are mutually distinct, then all columns $A^\intercal_{:,j_i}$ must be linearly independent.
Thus, choosing optimal $S$ is equivalent to finding a linearly independent subset of vectors $V\in\F^{R_0}$ that maximizes $\paren{\# \ j \textrm{ s.t. } A_{:,j}\in\bigcup_{v\in V} \span{\brace{v}}}$.
We present a greedy heuristic to choose moderately good $S$:

\begin{algorithm}
    \caption{Greedy heuristic for maximizing number of monomial columns,
    denoted as \texttt{greedy}$(W)$
    }
    \label{monomial-greedy}
    \KwData{
        $W\in\F^{m\times n}$ s.t. $\forall j:W_{:,j}\ne\vec{0}$
    }
    \KwResult{$S\in\GL{m}{\F}$ s.t. $SW$ has a large number of monomial columns}
    $U\gets\brace{\brace{\eta(W_{:,j}):0\le j<n}}$ \Comment*[r]{multiset of vectors
    \\
    $\eta(v)$ denotes the canonical normalization of $v$.}
    $V\gets\emptyset$ \Comment*[r]{subset of columns of $W$}
    \While{$\exists v\in U$ s.t. $v\not\in\span{V}$}{
        $v^*\gets$ element in $U$ with highest multiplicity s.t. $v^*\not\in\span{V}$\;
        $V\gets V\cup\brace{v^*}$
    }
    $M_V\gets\MA{c|c|c}{\cdots&v_i&\cdots}$ for each element $v_i\in V$, in arbitrary order\;
    $S\gets$ matrix s.t. $SM_V=\rref{M_V}=\MA{c}{I_{|V|}\\\hline O}$ (from Lemma \ref{fast-rankfac})\;
    \Return{$S$}
\end{algorithm}

It is clear that calculating $\mathtt{greedy}(A^\intercal)$ runs in $\poly(R)$ time, since checking if an arbitrary vector $u$ is in $\span{V}$ can be done with Gaussian elimination.
In Appendix \ref{monomial-greedy-proof}, we prove that by setting $S\gets\texttt{greedy}(A^\intercal)$, $SA^\intercal$ is guaranteed to have at least $\ceil{K\max{1,\frac{R}{\paren{\frac{|\F|^K-1}{|\F|-1}}}}}$ many monomial columns, where $K:=\rank{A^\intercal}=\rank{A}$; we also prove that this is the best possible worst-case lower bound.

\subsubsection{Refinement of matrix factorization}
We can further reduce the number of $M_r$ we enumerate by one (and have extra speedup) by using some casework on matrix ranks and properties of matrix factorization.

Set $S$ to $\texttt{greedy}(A^\intercal)$ using Algorithm \ref{monomial-greedy}.
Because the vector subset $V$ in the algorithm will be a basis of $\clmspan{A^\intercal}$ at the end of the algorithm, the bottom $R_0-K$ rows of $SA^\intercal$ must be all-zeros, so WLOG we can omit them \footnote{If $\forall K\le i<R_0:D_i=O$, then these rows have no effect; otherwise, no solution exists.}.
By relabeling variables, the system is equivalent to solving

\[\forall 0\le i<K: X_i+\sum_{0\le p<P} \alpha_{i,p} Y_p=D_i\]

\[\textrm{for matrices } X_i,Y_p\in\F^{R_1\times R_2},\]

\[\textrm{subject to } \forall i: \rank{X_i}\le \chi_i \textrm{ and } \forall p: \rank{Y_p}\le 1,\]

where $\chi_i:=(\textrm{\# of monomial columns in } SA^\intercal \textrm{ with nonzero element at row } i)$,
$P:=(\textrm{\# of non-monomial} \\ \textrm{ columns in } SA^\intercal)$, and $\alpha_{i,p}\in\F$ are select elements of $SA^\intercal$.
By construction, we have that $\forall i:\chi_i\ge 1$ and $P+\sum_i \chi_i = R$. As a consequence, $\forall i:\chi_i\le \overline{\chi}$, where $\overline{\chi}:=R-P-K+1$.

A necessary condition for a solution to exist is that $\rank{D_i}\le n_i$, where $n_i:=\chi_i+(\# \textrm{ of } p \textrm{ s.t. } \alpha_{i,p}\ne 0)$; WLOG we assume this condition from this point onward.

We then solve the system by breaking into cases over $P$:

\noindent \textbf{If $\boldsymbol{P=0}$}: the system can trivially be solved in $\fexp{0}{0}$ time.

\noindent \textbf{If $\boldsymbol{P=1}$}: we use the following lemma about matrix rank factorizations, which we prove in Appendix \ref{fullrank-enum-proof}:

\begin{lemma}
\label{fullrank-enum}
For $M\in\F^{m\times n}$ with rank $r$, consider $r$ many matrices $M_0,\dots,M_{r-1}\in\F^{m\times n}$ s.t. $\rank{M_i}\le 1$ and $M=\sum_{0\le i<r} M_i$.
Then for any $i$, $M_i$ has $\fexp{2r}{1}$ many possibilities, and it takes that much time to enumerate them.
\end{lemma}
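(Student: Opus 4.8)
The plan is to reduce, in two steps, to a pure counting question about $r\times r$ matrices over $\F$, and then read off the enumeration. First, since $\rank{\cdot}$ is subadditive and $\rank{M}=r=\sum_i\rank{M_i}$, every $M_i$ must have rank exactly $1$, and for each $i$ we get $\rank{M-M_i}=\rank{\sum_{j\ne i}M_j}=r-1$. Conversely, any $N\in\F^{m\times n}$ with $\rank{N}=1$ and $\rank{M-N}=r-1$ arises as $M_i$ in some valid decomposition: factor $M-N$ into $r-1$ rank-$1$ terms (a rank factorization) and place $N$ in slot $i$. Hence the set of possibilities for $M_i$ is exactly $\mathcal N:=\brace{N:\rank{N}=1,\ \rank{M-N}=r-1}$, independent of $i$.

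Second, I would fix one rank factorization $M=U_0V_0^\intercal$ with $U_0\in\F^{m\times r}$ and $V_0\in\F^{n\times r}$ of full column rank (computable by Gaussian elimination in $\poly(m,n,r)$ time). Every $N\in\mathcal N$ has $\clmspan{N}\subseteq\clmspan{M}=\clmspan{U_0}$ and $\rowspan{N}\subseteq\rowspan{M}=\rowspan{V_0^\intercal}$, so $N=U_0\hat N V_0^\intercal$ for a unique $\hat N\in\F^{r\times r}$; since $X\mapsto U_0XV_0^\intercal$ preserves rank and sends $I-\hat N$ to $M-N$, this gives a bijection between $\mathcal N$ and $\brace{\hat N\in\F^{r\times r}:\rank{\hat N}=1,\ \rank{I-\hat N}=r-1}$. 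Writing $\hat N=gh^\intercal$ with $g,h\ne\vec 0$, the matrix determinant lemma gives $\det(I-\hat N)=1-h^\intercal g$, and as $\rank{I-\hat N}\ge r-1$ always, the condition $\rank{I-\hat N}=r-1$ is exactly $h^\intercal g=1$. Every rank-$1$ matrix is $gh^\intercal$ for a unique canonically normalized $g$, of which there are $\frac{|\F|^r-1}{|\F|-1}$, and for each such $g$ the equation $h^\intercal g=1$ cuts out an affine hyperplane of $|\F|^{r-1}$ vectors $h$, all automatically nonzero. So $|\mathcal N|=\frac{(|\F|^r-1)|\F|^{r-1}}{|\F|-1}\le\frac{|\F|^{2r}}{|\F|-1}\in\fexp{2r}{1}$.

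The enumeration is read directly off this argument: compute $U_0,V_0$; loop over canonically normalized $g\in\F^r$; for each, compute a particular solution and a basis of $\brace{h:h^\intercal g=0}$ for the system $h^\intercal g=1$, loop over its $|\F|^{r-1}$ solutions $h$, and output $U_0(gh^\intercal)V_0^\intercal$. Each candidate is produced with $\poly(m,n,r)$ overhead, so the total running time is $\fexp{2r}{1}$.

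The main obstacle is the bookkeeping that pins down $\mathcal N$ exactly: the converse in the first step (that the local rank conditions suffice, which relies on existence of rank factorizations) together with the surjectivity in the second step (that every $N\in\mathcal N$ really is of the form $U_0\hat NV_0^\intercal$, which relies on the column- and row-space containments and on $U_0,V_0$ having full column rank). Once those are nailed down, the count in the third step is routine.
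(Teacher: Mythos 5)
Your proof is correct, and it takes a genuinely different route from the paper's. The paper invokes its Lemma \ref{fullrank} (every rank-$r$ factorization of $M$ equals $(C_0X,\ X^{-1}F_0)$ for some $X\in\GL{r}{\F}$), identifies decompositions into rank-$\le 1$ summands with rank-$r$ factorizations, writes $M_i=C_{:,i}F_{i,:}=C_0\paren{X_{:,i}(X^{-1})_{i,:}}F_0$, and then enumerates \emph{all} $r\times r$ matrices of rank $\le 1$ in the middle slot via Lemma \ref{rank1-enum}, i.e. a superset of size $\fexp{2r}{1}$. You instead characterize the possible summands exactly, as $\brace{N:\rank{N}=1,\ \rank{M-N}=r-1}$, transport to the $r\times r$ setting through one fixed rank factorization $M=U_0V_0^\intercal$, and count exactly with the matrix determinant lemma: the admissible middle matrices are precisely $gh^\intercal$ with $h^\intercal g=1$, giving $\frac{(|\F|^r-1)|\F|^{r-1}}{|\F|-1}$ possibilities, i.e. $\fexp{2r-1}{1}$, slightly sharper than the paper's bound and with no wasted candidates in the enumeration. (Indeed, your condition $h^\intercal g=1$ is exactly what characterizes which middle matrices $X_{:,i}(X^{-1})_{i,:}$ actually occur in the paper's parametrization, since $(X^{-1})_{i,:}X_{:,i}=1$.) What the paper's route buys is brevity, reusing machinery it needs anyway for Lemma \ref{add1rank-enum}; what yours buys is an exact count and independence from the uniqueness-of-factorization lemma. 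The one step you flag but do not spell out, $\clmspan{N}\subseteq\clmspan{M}$ and $\rowspan{N}\subseteq\rowspan{M}$, is a one-line dimension count: $\clmspan{M}\subseteq\clmspan{N}+\clmspan{M-N}$, and the right-hand side has dimension at most $1+(r-1)=r=\rank{M}$, forcing equality and hence the containment (symmetrically for rows); with that filled in, everything else goes through.
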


To solve our system, we break into two subcases:
\begin{itemize}
    \item If $\exists i$ s.t. $\rank{D_i}=n_i$ and $n_i=\chi_i+1$:

    we have $\alpha_{i,0}\ne 0$, so we can enumerate all $\fexp{2n_i}{1}\le \fexp{2(\chi_i+1)}{1}$ many possibilities of $Y_0$.

    \item Else:
    
    we have that $\forall i:\rank{D_i}\le \chi_i$, so WLOG we can set $Y_0=O$.
\end{itemize}

For each assignment of $Y_0$ we fix, we solve the rest of the system similarly to having $P=0$. In the worst case, solving the $P=1$ case takes $\fexp{2(\overline{\chi}+1)}{1}$ time.

\noindent \textbf{If $\boldsymbol{P=2}$}: we use the following lemma about matrix factorizations containing one more rank than necessary, which we prove in Appendix \ref{add1rank-enum-proof}:

\begin{lemma}
\label{add1rank-enum}
For $M\in\F^{m\times n}$ with rank $r-1$, consider $r$ many matrices $M_0,\dots,M_{r-1}\in\F^{m\times n}$ s.t. $\rank{M_i}\le 1$ and $M=\sum_{0\le i<r} M_i$.
Then for any $i$, $M_i$ has $\fexp{m+r-1}{1}+\fexp{n+r-1}{1}$ many possibilities, and it takes that much time to enumerate them.
\end{lemma}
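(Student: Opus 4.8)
The plan is to determine exactly which matrices can occur as a summand, count them, and give a matching enumeration; since the set of admissible values of $M_i$ is the same for every index (the sum is symmetric), I would fix $i=0$. The first step is to reduce to a pair of rank conditions: a matrix $N$ with $\rank{N}\le 1$ occurs as $M_0$ in some decomposition $M=\sum_{0\le i<r}M_i$ with every $\rank{M_i}\le 1$ if and only if $\rank{M-N}\le r-1$. The forward direction is immediate since $M-N=\sum_{i\ge 1}M_i$; for the converse I would take any rank factorization of $M-N$ into $s:=\rank{M-N}\le r-1$ rank-one matrices (possible over any field) and pad with $r-1-s$ zero matrices. Hence the admissible $M_0$ are precisely $O$ together with the rank-one matrices $uw^\intercal$ for which $\rank{M-uw^\intercal}\le r-1$.

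The heart of the proof is a rank-one update dichotomy: for $M$ of rank $\rho$, $\rank{M-uw^\intercal}=\rho+1$ exactly when $u\notin\clmspan{M}$ and $w\notin\clmspan{M^\intercal}$, and $\rank{M-uw^\intercal}\le\rho$ otherwise. The easy half: if $u\in\clmspan{M}$ then $\clmspan{M-uw^\intercal}\subseteq\clmspan{M}$, and symmetrically if $w\in\clmspan{M^\intercal}$. For the harder half I would compute the rank of the bordered matrix $\M{M & u \\ w^\intercal & 1}$ in two ways: subtracting $u$ times its last row from its top block turns it into $\M{M-uw^\intercal & O \\ w^\intercal & 1}$, so its rank equals $\rank{M-uw^\intercal}+1$; on the other hand, when $u\notin\clmspan{M}$ the top block $\M{M & u}$ already has rank $\rho+1$, and when moreover $w\notin\clmspan{M^\intercal}$ the last row is not in that block's row span, so the bordered matrix has rank $\rho+2$. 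Applying this with $\rho=r-1$, a rank-one matrix $uw^\intercal$ is admissible iff $u\in\clmspan{M}$ or $w\in\clmspan{M^\intercal}$, where $\clmspan{M}$ has dimension $r-1$.

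It then remains to count and enumerate. The rank-one matrices $uw^\intercal$ with $u\in\clmspan{M}$ are in bijection with pairs consisting of a canonically normalized $u\in\clmspan{M}$ and a nonzero $w\in\F^n$, so there are $\frac{|\F|^{r-1}-1}{|\F|-1}\paren{|\F|^n-1}\le\frac{|\F|^{n+r-1}}{|\F|-1}\in\fexp{n+r-1}{1}$ of them; by symmetry there are $\fexp{m+r-1}{1}$ with $w\in\clmspan{M^\intercal}$; the overlap of the two families is counted at most twice and adding the single matrix $O$ changes nothing, so the total is $\fexp{m+r-1}{1}+\fexp{n+r-1}{1}$. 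To enumerate, I would compute bases of $\clmspan{M}$ and $\clmspan{M^\intercal}$ by row reduction, iterate over the canonically normalized vectors of each subspace and, for each, over all nonzero vectors of the opposite ambient space, and output the corresponding outer products together with $O$, producing each candidate with only polynomial overhead, so the running time matches the count.

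The only real content is the rank-one update dichotomy, and the delicate point there is getting the boundary cases right — in particular that the rank goes up by one only when the new column direction lies outside the column space \emph{and} the new row direction outside the row space. The reduction and the counting are routine, the mild things to check being that a rank-$s$ matrix decomposes into $s$ rank-one summands over an arbitrary finite field and that double-counting between the column- and row-conditioned families costs at most a factor of two.
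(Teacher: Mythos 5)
Your proposal is correct, but it reaches the bound by a genuinely different route than the paper. The paper works with the factorization picture $M=CF$, $M_i=C_{:,i}F_{i,:}$: it first proves that full-rank factorizations are unique up to an invertible factor (Lemma \ref{fullrank}) and that $\rank{C}+\rank{F}\le r+q$ (Lemma \ref{rankfac-ineq}), concludes that either $C=C_0G$ or $F=HF_0$ for a full-rank factorization $(C_0,F_0)$ of $M$, and hence that every summand has the form $C_0\cdot(\textrm{rank-}\le 1)$ or $(\textrm{rank-}\le 1)\cdot F_0$, which Lemma \ref{rank1-enum} counts. You instead characterize the admissible summands directly: $N$ with $\rank{N}\le 1$ can occur iff $\rank{M-N}\le r-1$, and your bordered-matrix argument for the rank-one update dichotomy (rank increases iff $u\notin\clmspan{M}$ and $w\notin\clmspan{M^\intercal}$) is a correct and complete proof of the key step, valid over any field. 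Both arguments identify the same candidate set --- rank-$\le 1$ matrices whose column space lies in $\clmspan{M}$ or whose row space lies in $\rowspan{M}$ --- and the same count $\fexp{m+r-1}{1}+\fexp{n+r-1}{1}$. What your route buys is an exact (necessary \emph{and} sufficient) characterization of the possible $M_i$, and it is more self-contained, needing only the rank-one update fact and the standard decomposition of a rank-$s$ matrix into $s$ rank-one terms; what the paper's route buys is reuse of machinery (Lemmas \ref{fullrank}, \ref{rankfac-ineq}, \ref{rank1-enum}) that it needs anyway for Lemma \ref{fullrank-enum} and Lemma \ref{mat-count}, at the cost of only giving an upper bound on (a superset of) the admissible summands, which suffices for the algorithm.
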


To solve our system, we break into three subcases:
\begin{itemize}
    \item If $\exists i$ s.t. $\rank{D_i}=n_i$ and $n_i\ge \chi_i+1$:

    there must be at least one $p$ s.t. $\alpha_{i,p}\ne 0$, so for such $p$ we can enumerate all $\fexp{2n_i}{1}\in \fexp{2(\chi_i+2)}{1}$ many possibilities of $Y_p$.

    \item Else if $\exists i$ s.t. $\rank{D_i}=n_i-1$ and $n_i=\chi_i+2$:

    we have $\alpha_{i,0}\ne 0$ and $\alpha_{i,1}\ne 0$, so we can enumerate all $\fexp{R_1+n_i-1}{1}+\fexp{R_2+n_i-1}{1}\in \fexp{R+\chi_i+1}{1}$ many possibilities of $Y_1$.

    \item Else:

    we have that $\forall i:\rank{D_i}\le \chi_i$, so WLOG we can set $Y_p=O$ for all $p$.
\end{itemize}

For each assignment of some $Y_p$ we fix, we solve the rest of the system similarly to having $P$ be lower. In the worst case, this adds an extra $\fexp{R+\overline{\chi}+1}{1}$ multiplicative factor to the time complexity for the $P=1$ case.

% In the worst case, we take $\fexp{2(R-K+1)}{1}\cdot\fexp{R+(R-K+1)}{1}=\fexp{4R-3K+3}{2}$ total time.

\noindent \textbf{If $\boldsymbol{P>2}$}: we enumerate $Y_2,\dots,Y_{P-1}$ to reduce the system to the $P=2$ case, adding an extra $\fexp{2R}{1}^{P-2}$ multiplicative factor to the time complexity for the $P=1$ case.

Thus, for general $P$ the system can be solved in \[
\begin{array}{rl}
\pexp{R}{K}{P}&:=\casew{
    \fexp{0}{0} & P=0 \\
    \fexp{2(\overline{\chi}+1)}{1} & P=1 \\
    \fexp{2(\overline{\chi}+1)}{1}\cdot\fexp{R+\overline{\chi}+1}{1}\cdot \fexp{2R}{1}^{P-2}
}
\\
&=\casew{
    \fexp{0}{0} & P=0 \\
    \fexp{2(\overline{\chi}+1)}{1} & P=1 \\
    \fexp{R(2P-3)+3\overline{\chi}+3}{P}
}
% 2(\overline{\chi}+1) + R+\overline{\chi}+2 + 2R(P-2)
% = R(2P-3)+3\overline{\chi}+3
% = R(2P-3)+3(R-P-K+1)+3
% = 2PR-3(P+K)+6
\\
&=\casew{
    \fexp{0}{0} & P=0 \\
    \fexp{2(R-K+1)}{1} & P=1 \\
    \fexp{2PR-3(P+K)+6}{P}
}
\end{array}
\] time.

\subsubsection{Time complexity}
To bound the total running time of our algorithm, we first bound the number of matrices of fixed rank with normalized rows.
\begin{lemma}
\label{mat-count}
There are $\fexp{r(m+n-r+1)}{r+m}$ many matrices $A\in\F^{m\times n}$ s.t. $\rank{A}=r$ and each row of $A$ is canonically normalized.
\end{lemma}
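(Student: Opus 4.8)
The plan is to count these matrices by first exposing their row space. Suppose $A\in\F^{m\times n}$ has $\rank{A}=r$ and every row canonically normalized, and set $V:=\rowspan{A}\subseteq\F^n$, so that $\dim{V}=r$ and each row $A_{i,:}$ is a canonically normalized vector lying in $V$. Since $A$ is recovered from the ordered list of its rows, the number of such $A$ is at most $\sum_V N(V)^m$, where the sum ranges over $r$-dimensional subspaces $V\subseteq\F^n$ and $N(V)$ denotes the number of canonically normalized vectors contained in $V$. (The lemma only asserts an upper bound, since $\fexp{\cdot}{\cdot}$ is a big-$O$ expression, so over-counting is harmless.)

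Next I would pin down $N(V)$ using the fact that canonical normalization identifies the canonically normalized vectors of $\F^n$ with the $1$-dimensional subspaces of $\F^n$: every nonzero vector has a unique canonically normalized scalar multiple, namely its canonical normalization, and two nonzero vectors share the same canonical normalization exactly when they span the same line. A line is contained in $V$ iff its canonically normalized generator is, so this correspondence restricts to a bijection between canonically normalized vectors in $V$ and lines in $V$; hence $N(V)=\frac{|\F|^r-1}{|\F|-1}$, the same for every $r$-dimensional $V$.

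It remains to count the $r$-dimensional subspaces of $\F^n$ and to estimate. This number equals $\prod_{i=0}^{r-1}\frac{|\F|^{n-i}-1}{|\F|^{r-i}-1}$; bounding the numerator above by $\prod_{i=0}^{r-1}|\F|^{n-i}=|\F|^{rn-r(r-1)/2}$ and the denominator below by $\prod_{i=0}^{r-1}\paren{|\F|^{r-i}-|\F|^{r-i-1}}=(|\F|-1)^r|\F|^{r(r-1)/2}$ yields the bound $\frac{|\F|^{r(n-r+1)}}{(|\F|-1)^r}$. Combining the three steps, the number of matrices in question is at most
\[\frac{|\F|^{r(n-r+1)}}{(|\F|-1)^r}\cdot\paren{\frac{|\F|^r-1}{|\F|-1}}^m\le\frac{|\F|^{r(n-r+1)}}{(|\F|-1)^r}\cdot\frac{|\F|^{rm}}{(|\F|-1)^m}=\frac{|\F|^{r(m+n-r+1)}}{(|\F|-1)^{r+m}}\in\fexp{r(m+n-r+1)}{r+m}.\]

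I do not expect a genuine obstacle here. The one point needing care is making the ``canonical normalization $\leftrightarrow$ line'' correspondence precise enough that it restricts cleanly to subspaces, which is what lets the count split as a clean product; after that the only work is the exponent bookkeeping $rn-r(r-1)=r(n-r+1)$ in the bound on the number of subspaces. Because $\fexp{\cdot}{\cdot}$ already absorbs a $\poly(R)$ factor, no lower-order corrections need to be tracked.
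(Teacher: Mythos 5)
Your proof is correct, but it takes a genuinely different route from the paper. The paper counts rank-$r$ matrices by quotienting: it invokes Lemma \ref{fullrank} to say each rank-$r$ matrix has exactly $|\GL{r}{\F}|$ rank-$r$ factorizations $(U,V)$, bounds the count of such matrices by $\frac{|\F|^{mr+rn}}{|\GL{r}{\F}|}$, divides by $(|\F|-1)^m$ to account for independent row rescalings (using that all rows are nonzero), and then lower-bounds $|\GL{r}{\F}|$ by $|\F|^{r(r-1)}(|\F|-1)^r$. You instead count by first choosing the rowspan $V$ (a Gaussian-binomial count of $r$-dimensional subspaces of $\F^n$, bounded by $\frac{|\F|^{r(n-r+1)}}{(|\F|-1)^r}$) and then each row independently as one of the $\frac{|\F|^r-1}{|\F|-1}$ canonically normalized vectors of $V$, i.e.\ lines in $V$. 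Both arguments only need an upper bound, which is all the $\fexp{\cdot}{\cdot}$ notation asserts, and both land on exactly $\frac{|\F|^{r(m+n-r+1)}}{(|\F|-1)^{r+m}}$. Your version is more elementary and self-contained: it does not rely on Lemma \ref{fullrank}, it gets the $(|\F|-1)^{-m}$ factor directly from the line/normalization bijection rather than from a rescaling argument, and your over-counting (rows lying in $V$ without spanning it, matrices of lower rank counted under several $V$) is explicitly harmless for an upper bound. The paper's version buys brevity by reusing machinery it has already developed for Lemma \ref{fullrank-enum}, at the cost of a slightly glossed step (its ``the number of distinct $A$ with rank $r$ is $\frac{|\F|^{mr+rn}}{|\GL{r}{\F}|}$'' is really an inequality, since not every pair $(U,V)$ yields a rank-$r$ product, which is immaterial for the bound). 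Either argument is acceptable.
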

\begin{proof}
By Lemma \ref{fullrank}, $A$ has $|\GL{r}{\F}|$ many rank factorizations $(U\in\F^{m\times r},\ V\in\F^{r\times n})$. Since different matrices obviously cannot have the same factorization in common, the number of distinct $A\in\F^{m\times n}$ with rank $r$ is $\frac{|\F|^{mr+rn}}{|\GL{r}{\F}|}$.
If we furthermore restrict each row of $A$ to be nonzero, then each row can be independently replaced with $|\F|-1$ many nonzero scalar multiples, so the number of rank-$r$ matrices $A$ with canonically normalized rows is $\le \frac{|\F|^{mr+rn}}{|\GL{r}{\F}|\cdot (|\F|-1)^m}$.
Since $|\GL{r}{\F}|=\prod_{i=0}^{r-1} (|\F|^r-|\F|^i)\ge (|\F|^r-|\F|^{r-1})^r=|\F|^{r(r-1)}(|\F|-1)^r$, $\frac{|\F|^{mr+rn}}{|\GL{r}{\F}|\cdot (|\F|-1)^m}
\in \fexp{r(m+n-r+1)}{r+m}$.
\end{proof}

% When $r=1$, we give a tighter bound: every such matrix must be of the form $\M{1\\\vdots\\1}v$ for canonically normalized $v\in\F^{1\times n}$, yielding $\fexp{n}{1}$ many such matrices.

By summing over all $A$ that we fix, the total running time of our algorithm is upper-bounded by $\sum_{K=1}^R \fexp{K(2R-K+1)}{K+R} \cdot \pexp{R}{K}{R-\mu(R,K)}$ \footnote{Note that we do not have to consider $K=0$, since each row in $A$ is required to not be all-zeros.}, where $\mu(R,K):=\ceil{K\max{1,\frac{R}{\paren{\frac{|\F|^K-1}{|\F|-1}}}}}$.

Because we assume $R\ge 1$, we have $\mu(R,1)=R$, so the $K=1$ term in the summation can be replaced with $\fexp{2R}{R+1}$ \footnote{This upper bound provided by Lemma \ref{mat-count} is quite loose for matrix rank $r=1$, as all allowed matrices must equal $\M{1\\\vdots\\1}v$ for canonically normalized row vector $v$, yielding $\fexp{n}{1}$ many such matrices instead of $\fexp{m+n}{m+1}$. This discrepancy does not affect our final result.}.

Since the summation is difficult to further simplify, we give a loose upper bound by replacing $\mu(R,K)$ with $K$, which is accurate for large $|\F|$:
then %$\pexp{R}{K}{R-K}=\fexp{\paren{\casew{0 & R-K=0 \\ 2 & R-K=1 \\ 2PR-3K+3}}}{R-K}$,
the asymptotic running time simplifies to
\[\begin{array}{ll}
&\fexp{2R}{R+1} + \sum_{K=2}^R \fexp{K(2R-K+1)}{K+R} \cdot \pexp{R}{K}{R-K} \\
\\
=&\fexp{2R}{R+1} \\
&+\sum_{K=2}^{R} \fexp{K(2R-K+1)}{K+R} \cdot \casew{
    \fexp{0}{0} & R-K=0 \\
    \fexp{4}{1} & R-K=1 \\
    \fexp{2(R-K)R-3R+6}{R-K}
} \\
\\
% K(2R-K+1) + 2(R-K)R-3R+6
% =2RK-K^2+K + 2R^2-2RK-3R+6
% =2R^2-3R-K^2+K+6
=&\fexp{2R}{R+1} \\
&+\sum_{K=2}^{R} \casew{
    \fexp{R(R+1)}{2R} & K=R \\
    \fexp{R^2+R+2}{2R} & K=R-1 \\
    \fexp{2R^2-3R-K^2+K+6}{2R}
} \\
=&\fexp{2R}{R+1} \\
&+\casew{
\fexp{2R^2-3R+4}{2R} & R\ge 4 \\
\fexp{14}{6} & R=3 \\
\fexp{6}{4} & R=2 \\
0
} \\
\\
=&\fexp{\paren{\begin{cases}
2R^2-3R+4 & R\ge 4 \\
14 & R=3 \\
6 & R=2 \\
2 & R=1
\end{cases}}}{2R}. \\
\end{array}\]

For small $|\F|$, we avoid replacing $\mu(R,K)$, and we estimate the running time constant factor by replacing $\fexp{n}{k}$ with $f(n,k):=\frac{|\F|^n}{(|\F|-1)^k}$ and summing up terms instead of extracting the largest term.
Table \ref{results} summarizes the approximate constant factor in asymptotic running time of our algorithm versus the technique of fixing two factor matrices.

\begin{table}[h]
    \centering
    \begin{tabular}{cc}
        \begin{tabular}{|c|c|c|c|}
            \hline
            \diagbox{$R$}{$\F$} & $\F_2$ & $\F_3$ & large \\
            \hline
            1 & $\num{4.0}$ & $\num{2.25}$ & $f(2,2)$ \\
            \hline
            2 & $\num{80.0}$ & $\num{55.7}$ & $f(6,4)$ \\
            \hline
            3 & $\num{2.05e+04}$ & $\num{8.31e+04}$ & $f(14,6)$ \\
            \hline
            4 & $\num{6.29e+06}$ & $\num{1.24e+09}$ & $f(24,8)$ \\
            \hline
            5 & $\num{3.98e+10}$ & $\num{5.45e+13}$ & $f(39,10)$ \\
            \hline
        \end{tabular}
        &
        \begin{tabular}{|c|c|c|c|}
            \hline
            \diagbox{$R$}{$\F$} & $\F_2$ & $\F_3$ & large \\
            \hline
            1 & $\num{4.0}$ & $\num{2.25}$ & $f(2,2)$ \\
            \hline
            2 & $\num{2.56e+02}$ & $\num{4.1e+02}$ & $f(8,4)$ \\
            \hline
            3 & $\num{2.62e+05}$ & $\num{6.05e+06}$ & $f(18,6)$ \\
            \hline
            4 & $\num{4.29e+09}$ & $\num{7.24e+12}$ & $f(32,8)$ \\
            \hline
            5 & $\num{1.13e+15}$ & $\num{5.91e+15}$ & $f(50,10)$ \\
            \hline
        \end{tabular}
    \end{tabular}
    
    \caption{Approximate constant factor $C$ for solving rank-$R$ CPD of a $R\times R\times R$ tensor over finite field $\F$ in $O(C\poly(R))$ time, by using our algorithm (left) and by fixing two factor matrices and solving the third via a linear system (right).
    The ``large" column represents arbitrarily large finite fields, and the function $f(n,k)$ denotes $\frac{|\F|^n}{(|\F|-1)^k}$.}
    \label{results}
\end{table}

\section{Future directions}
We conclude with some open questions, which are natural generalizations of our work:

\begin{itemize}
    \item What is the lowest possible constant factor for solving rank-$R$ CPD of a 3-dimensional tensor over a finite field $\F$, for fixed $R$ and fixed $\F$?
    \item Is CPD fixed-parameter tractable over number systems beyond finite fields, such as the integers?
\end{itemize}

\section*{Acknowledgments}
We thank Erik Demaine and Ani Sridhar for inspiring us to start and continue researching this topic.

\newpage
\begin{appendices}
\section{}
\subsection{Fast low-rank matrix factorization}
\label{fast-rankfac-proof}

To prove Lemma \ref{fast-rankfac}, we run Gaussian elimination with early termination while maintaining extra information:
\begin{algorithm}
    \caption{Gaussian elimination with early termination by rank}
    \KwData{
        $M\in\F^{m\times n}$, integer $r'$
    }
    \KwResult{
    If $r:=\rank{M}\le r'$:
    $(C\in\GL{m}{\F},\ J\in\GL{m}{\F},\ F\in\F^{r\times n})$ s.t. $F=\rref{M}_{:r,:},\ M=C_{:,:r}F,\ J=C^{-1}$;
    else, $\emptyset$}
    $C\gets I_m$, $J\gets I_m$, $F\gets M$, $q\gets 0$\;

    % \Comment*[r]{eliminating down}
    \For{$j=0,\dots,n-1$}{
        \If{$\exists$ $q\le i<m$ s.t. $F_{i,j}\ne 0$}{
            $i\gets$ arbitrary index s.t. $F_{i,j}\ne 0$\;
            
            swap rows $i$ and $q$ of $F$\;
            swap rows $i$ and $q$ of $J$\;
            swap columns $i$ and $q$ of $C$\;

            $\sigma\gets F_{q,j}$ \Comment*[r]{A}
            $F_{q,:} \gets F_{q,:}/\sigma$ \Comment*[r]{A}
            $J_{q,:} \gets J_{q,:}/\sigma$ \Comment*[r]{A}
            $C_{:,q} \gets \sigma C_{:,q}$ \Comment*[r]{A}
            
            \For{$k=q+1,\dots,m-1$}{
                $s\gets F_{k,j}$ \Comment*[r]{B}
                $F_{k,:}\gets F_{k,:}-s F_{q,:}$ \Comment*[r]{B}
                $J_{k,:}\gets J_{k,:}-s J_{q,:}$ \Comment*[r]{B}
                $C_{:,q}\gets C_{:,q}+s C_{:,k}$ \Comment*[r]{B}
            }

            $q\gets q+1$\;
            \If{$q>r'$}{
                \Return{$\emptyset$}
            }
        }
    }

    % \Comment*[r]{eliminating up}
    \For{$i=q-1,\dots,0$}{
        $j\gets$ smallest index s.t. $F_{i,j}\ne 0$ \Comment*[r]{must have $F_{i,j}=1$ after running this instruction}
        \For{$k=0,\dots,i-1$}{
            $s\gets F_{k,j}$ \Comment*[r]{C}
            $F_{k,:}\gets F_{k,:}-s F_{q,:}$ \Comment*[r]{C}
            $J_{k,:}\gets J_{k,:}-s J_{q,:}$ \Comment*[r]{C}
            $C_{:,q}\gets C_{:,q}+s C_{:,k}$ \Comment*[r]{C}
        }
    }
    
    \Return{$(C,J,F_{:q,:})$}
\end{algorithm}

% \begin{enumerate}[(1)]
%     \item initialize $C\gets I_m,\ J\gets I_m,\ F\gets M,\ q\gets 0$:
%     \item for $j$ from $0$ to $n-1$, inclusive:
%     \begin{enumerate}
%         \item if there is no $q\le i<m$ s.t. $F_{i,j}\ne 0$: continue to the next iteration of the loop
%         \item else:
%         \begin{enumerate}
%             \item find some $q\le i<m$ s.t. $F_{i,j}\ne 0$
            
%             \item swap rows $i$ and $q$ of $F$
%             \item swap rows $i$ and $q$ of $J$
%             \item swap columns $i$ and $q$ of $C$

%             \item let $\sigma:=F_{q,j}$
%             \item set $F_{q,:} \gets F_{q,:}/\sigma$
%             \item set $J_{q,:} \gets J_{q,:}/\sigma$
%             \item set $C_{:,q} \gets \sigma C_{:,q}$
            
%             \item for $k$ from $q+1$ to $m-1$, inclusive:
%             \begin{enumerate}
%                 \item let $s=F_{k,j}$
%                 \item set $F_{k,:}\gets F_{k,:}-s F_{q,:}$
%                 \item set $J_{k,:}\gets J_{k,:}-s J_{q,:}$
%                 \item set $C_{:,q}\gets C_{:,q}+s C_{:,k}$
%             \end{enumerate}
%             \item increment $q$ by 1
%             \item if $q>r'$: terminate and return nothing
%         \end{enumerate}
%     \end{enumerate}
%     \item 
%     \item return $(C,J,F_{:q,:})$
% \end{enumerate}

If the early termination condition $q>r'$ is never satisfied, then after the algorithm finishes, $q$ will equal $r:=\rank{M}$ and $F$ will equal $\rref{M}$. Thus:

\begin{itemize}
    \item if early termination never occurs, i.e. if $r\le r'$:
    \begin{itemize}
        \item the for-loops $(k=q+1,\dots,m-1)$ and $(k=0,\dots,i-1)$ are each started $r$ many times;
        \item $\Rightarrow$ each line labeled $\texttt{A}$, $\texttt{B}$, and $\texttt{C}$ is reached $r$, $O(mr)$, and $O(r^2)$ times respectively;
        \item $\Rightarrow$ the labeled lines contribute total time complexity $O((r+mr+r^2)(m+n))\in O((r+mr)(m+n))$ (since $r\le\min{m,n}$)
    \end{itemize}

    \item else:
    \begin{itemize}
        \item the for-loops $(k=q+1,\dots,m-1)$ and $(k=0,\dots,i-1)$ are each started $r'$ and 0 many times respectively;
        \item $\Rightarrow$ each line labeled $\texttt{A}$, $\texttt{B}$, and $\texttt{C}$ is reached $r'$, $O(mr')$, and 0 times respectively;
        \item $\Rightarrow$ the labeled lines contribute total time complexity $O((r'+mr')(m+n))$
    \end{itemize}
\end{itemize}

Overall, the labeled lines contribute total time complexity $O(\min{r,r'}(1+m)(m+n))$.

Finally, initializing $C,J,F$ takes $O(m^2+mn)$ time, so the total time complexity of the algorithm overall is $O((\min{r,r'}+1)(m+1)(m+n))$.

For correctness, the algorithm maintains the invariants $CJ=I_m$ and $JM=F$ after each group of lines with the same label finishes running one iteration, so at the end of the algorithm (if early termination never occurs) we satisfy $J=C^{-1}$ and $M=CF=C_{:,:q}F_{:q,:}$.

% Each execution of step (b) involves $O(m)$ many row operations on $F$, each of which takes $O(n)$ time. The algorithm performs the same number of row operations on $J$ and number of column operations on $C$, each of which takes $O(m)$ time.
% Since initializing $C,J,F$ takes $O(m^2+mn)$ time, the total runtime is $O(m^2+mn+m(m+n)\min{r',r})
% =O(m(m+n)(1+\min{r',r}))$.

\subsection{Row reduction for monomial columns}
\label{monomial-greedy-proof}
For some $W\in\F^{m\times n}$ that does not contain all-zeros columns, we want to choose $S\in\GL{m}{\F}$ s.t. $SW$ has a large (not necessarily largest) number of monomial columns.

As in Algorithm \ref{monomial-greedy}, define $U$ as the multiset of canonical normalizations of the columns of $W$.
Let $V^*$ be the value of the vector subset $V$ at the end of the algorithm, and let $K=|V^*|$. The number of monomial columns in $SW$ is then equal to $\sum_{v\in V^*} m_U(v)$, where $m_U(v)$ is the multiplicity of $v$ in $U$.

Since the main loop only stops when $\span{V}$ contains all distinct elements of $U$, and each iteration increases $\dim{\span{V}}$ by exactly 1, we have that $\span{V^*}=\clmspan{W}$, so $K=\rank{W}$ and $U$ contains at most $N:=\frac{|\F|^K-1}{|\F|-1}$ many distinct vectors.
Let $\ell_0\ge\dots\ge \ell_{N-1}\ge 0$ be the list of $m_U(v)$ for every possible canonically normalized vector $v\in\span{V^*}$, sorted in nonincreasing order. By construction, $\sum_i \ell_i = |U|=n$.

During Algorithm \ref{monomial-greedy}, at the start of the $k$-th iteration of the main loop (where $k$ starts at 0), we have $|V|=k$, so $\span{V}$ contains $\frac{|\F|^k-1}{|\F|-1}$ many canonically normalized vectors. In the worst case, these vectors cover exactly that many distinct elements in $U$ of the highest multiplicity, so for the vector $v^*$ that we choose to add to $V$ in this iteration, we have $m_U(v^*)\ge \ell_{\frac{|\F|^k-1}{|\F|-1}}$. Additionally, since $v^*$ must exist in $U$, we have $m_U(v^*)\ge 1$.

Thus, the number of monomial columns in $SW$ is $\ge \max{K,\ \sum_{0\le k<K} \mu_k}$, where $\mu_k := \ell_{\frac{|\F|^k-1}{|\F|-1}}$.
To lower-bound this expression w.r.t. $n$, we can find an upper bound on $n$ w.r.t. $\mu_k$.
By replacing each $\ell_i$ in the summation $\sum_i \ell_i$ with $\mu_k$, where $k$ is the largest integer s.t. $\frac{|\F|^k-1}{|\F|-1}\le i$, we have $n\le \sum_{0\le k<K} |\F|^k \mu_k$.

Splitting the summation into suffix sums $S_k:=\sum_{k\le j<K} \mu_j$, we get $n\le S_0+\sum_{1\le j<K} \paren{|\F|^j-|\F|^{j-1}} S_j$.
Since $\mu_0\ge\mu_1\ge\dots\ge\mu_{K-1}$, we can upper bound $S_k$ w.r.t. $S_0$ by noticing that $S_k\le (K-k)\mu_k$ and $S_0 \ge S_k+k\ell_k$, so $S_k\le \frac{K-k}{K} S_0$.
Using this bound yields $n\le S_0 M$, where $M:=\paren{1+\sum_{1\le j<K} \paren{|\F|^j-|\F|^{j-1}}\cdot\frac{K-j}{K}}$; note that the number of monomial columns in $SW$ is at least $S_0$.

Calculating $M$ yields
\[
\begin{array}{ll}
M &=1+\frac{|\F|-1}{K}\sum_{1\le j<K} |\F|^{j-1} \paren{K-j}
\\
&=1+\frac{|\F|-1}{K}\sum_{1\le j\le i<K} |\F|^{j-1}
\\
&=1+\frac{|\F|-1}{K}\sum_{1\le i<K} \frac{|\F|^i-1}{|\F|-1}
\\
&=1+\frac{1}{K}\sum_{1\le i<K} (|\F|^i-1)
\\
&=1+\frac{1}{K}\paren{\frac{|\F|^K-|\F|}{|\F|-1} - (K-1)}
\\
&=\frac{1}{K}\paren{\frac{|\F|^K-|\F|}{|\F|-1} +1}
\\
&=\frac{1}{K} \frac{|\F|^K-1}{|\F|-1}
,
\end{array}
\]

% \[
% \begin{array}{ll}
% M &=1+\frac{|\F|-1}{K|\F|}\sum_{1\le j<K} |\F|^j \paren{K-j}
% \\
% &=1+\frac{|\F|-1}{K|\F|}\paren{K\frac{|\F|^K-|\F|}{|\F|-1} - \sum_{1\le j<K} |\F|^j j}
% ;
% \end{array}
% \]

% defining $L:=\sum_{1\le j<K} |\F|^j j$, we have
% \[
% \begin{array}{ll}
% L &=\sum_{1\le j<K} \sum_{1\le k\le j} |\F|^j
% \\
% &=\sum_{1\le k<K} \sum_{k\le j<K}  |\F|^j
% \\
% &=\sum_{1\le k<K} \frac{|\F|^K-|\F|^k}{|\F|-1}
% \\
% &=\frac{1}{|\F|-1}\paren{|\F|^K (K-1) - \frac{|\F|^K-|\F|}{|\F|-1}}
% \\
% &=\frac{1}{|\F|-1}\paren{|\F|^K K - \frac{|\F|^{K+1}-|\F|}{|\F|-1}}
% ,
% \end{array}
% \]

% so
% \[
% \begin{array}{ll}
% M &=1+\frac{1}{K|\F|}\paren{K(|\F|^K-|\F|) - \paren{|\F|^K K - \frac{|\F|^{K+1}-|\F|}{|\F|-1}}}
% \\
% &=1+\frac{1}{K|\F|}\paren{-K|\F| +\frac{|\F|^{K+1}-|\F|}{|\F|-1}}
% \\
% &=\frac{1}{K|\F|}\paren{\frac{|\F|^{K+1}-|\F|}{|\F|-1}}
% \\
% &=\frac{1}{K} \frac{|\F|^K-1}{|\F|-1}
% ,
% \end{array}
% \]

and thus $SW$ has at least $\max{K,\ n \big/ \paren{\frac{1}{K} \frac{|\F|^K-1}{|\F|-1}}}
=K\max{1,\ n \big/ \paren{\frac{|\F|^K-1}{|\F|-1}}}$ many monomial columns. Since the number of monomial columns in $SW$ must be an integer, we can apply the ceiling function to this expression.
This lower bound is tight when the columns of $W$ contain every possible canonically normalized $n$-vector with equal multiplicity.

\subsection{Matrix factorization}
Define a \textit{rank-$r$ factorization} of a matrix $M\in\F^{m\times n}$ as a pair $(U\in\F^{m\times r},\ V\in\F^{r\times n})$ s.t. $M=UV$. If such a pair exists, $M$ does not necessarily have rank $r$; if that happens to be true, we say that $(U,V)$ is also a \textit{full-rank} factorization.

Note that rank-$r$ factorization is equivalent to rank-$r$ CPD of a 2D tensor, since for vectors $u_0,\dots,u_{r-1}$ and $v_0,\dots,v_{r-1}$, $\sum_i u_i\times v_i = \MA{c|c|c}{&\uparrow&\\\cdots&u_i&\cdots\\&\downarrow&}_i \MA{ccc}{&\vdots& \\ \hline \leftarrow&v_i&\rightarrow \\ \hline &\vdots&}_i$.
As a corollary, there is a bijection between rank-$r$ factorizations and summations $M=\sum_i M_i$ for $M_0,\dots,M_{r-1}\in\F^{m\times n}$ where $\forall i: \rank{M_i}\le 1$, since a matrix is equal to some outer product of two vectors if and only if it has rank $\le 1$.

To prove Lemmas \ref{fullrank-enum} and \ref{add1rank-enum}, we first show the uniqueness of rref up to all-zeros rows, then use this to characterize all full-rank factorizations of an arbitrary matrix.

\begin{lemma}
\label{unique-rref}
Two rref matrices with the same rowspan must be equal, up to the presence of all-zero rows.
\end{lemma}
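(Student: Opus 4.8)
The plan is to prove the stronger and more familiar statement that the reduced row echelon form of a matrix is unique, and then observe that the claim about two rref matrices with the same rowspan follows immediately: if $M$ and $M'$ are both already in rref and $\rowspan{M}=\rowspan{M'}$, then each is obtained from the other by row operations (since having equal rowspans means each row of one is a linear combination of the rows of the other), so each equals $\rref{M}=\rref{M'}$ up to the all-zero rows that rref conventions may or may not append. So it suffices to show: if $N$ and $N'$ are two matrices in reduced row echelon form (ignoring trailing all-zero rows) with the same rowspan, then $N=N'$.

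The key steps, in order: First, I would show that the \textbf{pivot columns are determined by the rowspan}. Recall a column index $j$ is a pivot column of a matrix in rref iff that column, restricted to the rows, is a standard basis vector not appearing earlier. I would argue that $j$ is a pivot column iff $e$-th coordinate considerations force it: concretely, $j$ is a pivot iff the projection of $\rowspan{N}$ onto coordinates $0,\dots,j$ has strictly larger dimension than its projection onto coordinates $0,\dots,j-1$. This characterization refers only to the rowspan, not to the particular matrix, so $N$ and $N'$ have the same pivot columns $j_0<j_1<\dots<j_{r-1}$. Second, I would show each \textbf{nonzero row is determined}. The $i$-th row $N_{i,:}$ is the unique vector $w$ in $\rowspan{N}$ with $w_{j_i}=1$, $w_{j_\ell}=0$ for all $\ell\ne i$: existence is exactly the rref shape, and uniqueness follows because the difference of two such vectors lies in the rowspan and vanishes on all pivot coordinates, hence (by echelon structure / the pivot characterization above) must be zero. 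Since this description of $N_{i,:}$ depends only on the rowspan and the pivot set, we get $N_{i,:}=N'_{i,:}$ for every $i$, and the number of nonzero rows is $r=\dim{\rowspan{N}}$ in both cases, so $N=N'$.

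The main obstacle is making the pivot-column characterization and the uniqueness-of-rows argument precise without circularity: both rely on the echelon structure, namely that a nonzero vector in the rowspan whose leading nonzero entry sits in column $j$ forces $j$ to be at or after some pivot, and that no rowspan element can have all pivot coordinates zero unless it is the zero vector. I would isolate this as a short sublemma: in a matrix in rref with pivot columns $j_0<\dots<j_{r-1}$, a vector $v\in\rowspan{N}$ is zero iff $v_{j_0}=\dots=v_{j_{r-1}}=0$ — proved by writing $v=\sum_i c_i N_{i,:}$ and reading off $c_i=v_{j_i}$ from the pivot pattern. Everything else is bookkeeping. Since the excerpt only needs the "same rowspan $\Rightarrow$ equal up to all-zero rows" form, I would state the sublemma and the two determinacy claims, then close with the one-line reduction described above.
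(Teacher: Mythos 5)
Your proposal is correct and follows essentially the same route as the paper: read off the coefficients of any rowspan element at the pivot positions, deduce that the pivot columns are determined by the rowspan alone, and then identify each nonzero row as the unique rowspan vector with a $1$ at its own pivot and $0$ at the other pivots. The only cosmetic difference is that you certify pivot invariance via dimension jumps of coordinate projections, whereas the paper argues via the position of the leading nonzero entry of rowspan elements; both are sound and equivalent in substance.
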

\begin{proof}
Consider an rref matrix $W=\MA{c}{w_0\\\hline \vdots \\\hline w_{k-1} \\\hline O}$, where each $w_i\ne\vec{0}$ has its leading 1 at column $\ell_i$, with $\ell_0<\dots<\ell_{k-1}$. Consider a linear combination of the rows, $v:=\sum_i \alpha_i w_i$, where the $\alpha_i$ are not all zero: then $v_{\ell_i}=\alpha_i$, since by definition of rref, every column of $W$ that has a leading 1 has all of its other elements set to 0.
As a consequence, if we were only given $v$, then we could immediately determine the list of coefficients $\alpha_i$ of that linear combination reading each element $v_{\ell_i}$.

Additionally, if $v\ne\vec{0}$, the first nonzero element of $v$ must be at some $\ell_i$-th column, as we can use the following casework: if $\alpha_0\ne 0$, the leading nonzero term of $v$ is $\ell_0$;  else if $\alpha_1\ne 0$, the leading nonzero term of $v$ is $\ell_1$; etc.

Now suppose there is another rref matrix $W'$ s.t. $\rowspan{W}=\rowspan{W'}$: then $W'$ cannot have a leading 1 at a column that is not one of the $\ell_i$-th, otherwise $\rowspan{W'}$ would contain a row-vector that is not in $\rowspan{W}$. Conversely, $W'$ must contain a leading 1 at every $\ell_i$-th column, since $w_i\in\rowspan{W}$ for each $i$.
Thus, for each $0\le i<k$, $W'_{i,:}$ has a leading 1 at column $\ell_i$.

Finally, since $w_i\in\rowspan{W'}$ and, by definition of rref, $w_i$ contains a 1 at column $\ell_i$ and a 0 at all other leading columns $\ell_j$ for $j\ne i$, the only way for $w_i$ to be a linear combination of the rows of $W'$ is if $w_i=W'_{i,:}$. Thus, $W_{:k,:}=W'_{:k,:}$.
\end{proof}

\begin{lemma}
\label{fullrank}
For $M\in\F^{m\times n}$, let $r:=\rank{M}$.
Then there exists a rank-$r$ factorization $(C_0,\ F_0)$ of $M$.
Furthermore, any rank-$r$ factorization $(C,\ F)$ of $M$ must satisfy $(C,\ F)=(C_0X,\ X^{-1} F_0)$ for some $X\in\GL{r}{\F}$.
\end{lemma}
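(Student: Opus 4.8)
The plan is to exhibit $(C_0,F_0)$ via the reduced row echelon form of $M$, and then show every rank-$r$ factorization is obtained from it by inserting an invertible $r\times r$ matrix between the two factors. For existence: since $\rank{M}=r$, row-reducing $M$ (e.g.\ via Lemma \ref{fast-rankfac} with $r'=r$) yields some $C\in\GL{m}{\F}$ with $M=C_{:,:r}F_0$, where $F_0:=\rref{M}_{:r,:}$; set $C_0:=C_{:,:r}$. Note that $F_0$ is an rref matrix with no all-zero rows, since its $r$ rows are exactly the nonzero rows of $\rref{M}$, and $\rowspan{F_0}=\rowspan{M}$.

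For uniqueness, take an arbitrary rank-$r$ factorization $(C,F)$ with $C\in\F^{m\times r}$, $F\in\F^{r\times n}$, $M=CF$. The chain $r=\rank{M}=\rank{CF}\le\min{\rank{C},\rank{F}}\le r$ forces $\rank{C}=\rank{F}=r$; in particular $F$ has full row rank, and $\rowspan{M}=\rowspan{CF}\subseteq\rowspan{F}$ combined with equality of dimensions gives $\rowspan{F}=\rowspan{M}=\rowspan{F_0}$. Row-reducing $F$ produces $E\in\GL{r}{\F}$ with $EF=\rref{F}$, and $\rref{F}$ has no all-zero rows since $F$ has full row rank; as $\rref{F}$ and $F_0$ are rref matrices with the same rowspan and no zero rows, Lemma \ref{unique-rref} gives $\rref{F}=F_0$, hence $F=X^{-1}F_0$ with $X:=E\in\GL{r}{\F}$. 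To pin down $C$, substitute into $M=CF$ to get $(CX^{-1})F_0=M=C_0F_0$, so $(CX^{-1}-C_0)F_0=O$; every row $y^\intercal$ of $CX^{-1}-C_0$ then satisfies $y^\intercal F_0=\vec{0}$, i.e.\ $y$ lies in the left kernel of $F_0$, which is trivial because $\rank{F_0}=r$. Hence $CX^{-1}=C_0$, so $C=C_0X$ and $(C,F)=(C_0X,\ X^{-1}F_0)$, as claimed (conversely any such pair factors $M$, since $C_0X\cdot X^{-1}F_0=M$).

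I do not expect a genuine obstacle here — the statement is routine linear algebra — but the points that need care are: recognizing that full-rankness of both $C$ and $F$ is automatic the moment $\rank{M}=r$ (so the definition's ``rank-$r$ factorization'' is already a full-rank factorization); keeping straight on which side $X$ versus $X^{-1}$ acts when translating between ``$F$ row-reduces to $F_0$'' and ``$F=X^{-1}F_0$''; and, since orthogonal complements are unavailable over an arbitrary $\F$, using the rank--nullity theorem rather than a transpose trick to conclude $CX^{-1}-C_0=O$ from $(CX^{-1}-C_0)F_0=O$.
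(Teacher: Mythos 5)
Your proof is correct and follows essentially the same route as the paper's: construct $(C_0,F_0)$ from $\rref{M}$ via Lemma \ref{fast-rankfac}, deduce $\rowspan{F}=\rowspan{M}$ and $\rank{F}=r$, invoke Lemma \ref{unique-rref} to get $\rref{F}=F_0$ and hence $F=X^{-1}F_0$, then cancel $F_0$ (its rows being linearly independent) to conclude $C=C_0X$. The only cosmetic differences are how full row rank of $F$ is obtained and phrasing the cancellation as triviality of the left kernel rather than uniqueness of the solution, which are the same argument.
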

\begin{proof}
By Lemma \ref{fast-rankfac}, $\exists T\in\GL{m}{\F}$ s.t. $TM=\rref{M}$, so $T^{-1}\rref{M}=M$. Since all except the top $r$ rows of $\rref{M}$ are all-zero, $(T^{-1})_{:,:r}\rref{M}_{:r,:}=M$, so we can set $(C_0,F_0)=((T^{-1})_{:,:r},\rref{M}_{:r,:})$.

Now suppose there are $C\in\F^{m\times r},\ F\in\F^{r\times n}$ s.t. $CF=M$.
Since $\rowspan{M}=\rowspan{CF}\subseteq\rowspan{F}$ and $\rank{F}\le r=\rank{M}$, we have $\rowspan{M}=\rowspan{F}$ and $\rank{F}=r$.
Since row reduction preserves rowspan, $\rowspan{\rref{M}}=\rowspan{\rref{F}}$, so by Lemma \ref{unique-rref}, $\rref{F}=\rref{M}_{:r,:}=F_0$.

By Lemma \ref{fast-rankfac} again, $\exists X\in\GL{r}{\F}$ s.t. $XF=F_0$. Solving for $F$ and substituting it in the original equation $CF=M$ yields $(CX^{-1})F_0=M$. Since $\rank{F_0}=r$, its rows are linearly independent, so there is at most one solution for $CX^{-1}$. Since $C_0$ is a solution, $CX^{-1}=C_0$. Thus, $C=C_0X$ and $F=X^{-1}F_0$.
\end{proof}

Since invertible matrices form a group, it follows that all pairs of full-rank factorizations of $M$ are reachable from each other via the transformation $(C,F)\mapsto(CX,X^{-1}F)$, so the lemma stays true when $(C_0,\ F_0)$ is an arbitrary full-rank factorization of $M$, not just the one where $F_0=\rref{M}_{:r,:}$.

Lastly, we bound the number of $m\times n$ matrices with rank $\le 1$ and enumerate them as efficiently as possible (up to $\poly(R)$ factors).
\begin{lemma}
\label{rank1-enum}
There are $\fexp{m+n}{1}$ many matrices $M\in\F^{m\times n}$ s.t. $\rank{M}\le 1$,
and enumerating all of them takes $\fexp{m+n}{1}$ time.
\end{lemma}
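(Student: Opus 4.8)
The plan is to parametrize all rank-$\le 1$ matrices by an outer product $u \times v$ and then eliminate the redundancy coming from the scaling $u \times v = (\alpha u) \times (\alpha^{-1} v)$. First I would handle the degenerate case: the zero matrix is the unique rank-$0$ matrix, contributing $1$ to the count and $O(mn)$ to the enumeration time. For the rank-exactly-$1$ case, Lemma \ref{fullrank} (with $r = 1$) says every such $M$ has a rank-$1$ factorization $(u, v)$ with $u \in \F^{m \times 1}$, $v \in \F^{1 \times n}$ both nonzero, and any two such factorizations differ by an element of $\GL{1}{\F} = \F^\times$, i.e. by the rescaling above. So I would enumerate canonical representatives: fix $u$ to be a canonically normalized nonzero column vector (its first nonzero entry is $1$), which removes exactly the $|\F| - 1$-fold scaling ambiguity, and let $v$ range over all nonzero row vectors.

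Next I would count. There are $\frac{|\F|^m - 1}{|\F| - 1}$ canonically normalized vectors in $\F^m$ (one for each $1$-dimensional subspace), and $|\F|^n - 1$ nonzero vectors in $\F^n$. Adding the zero matrix, the total is
\[
1 + \frac{(|\F|^m - 1)(|\F|^n - 1)}{|\F| - 1},
\]
and I would check this lies in $\fexp{m+n}{1} = O\!\paren{\frac{|\F|^{m+n}}{|\F|-1} R^{O(1)}}$: the numerator is $O(|\F|^{m+n})$, the $+1$ and the subtracted cross terms only help, and $m, n \le R$ in the intended application so the bound absorbs into the polynomial-in-$R$ slack. For enumeration, iterating over all $(u, v)$ pairs in the representative set and forming each product $u \times v$ (an $O(mn)$ operation, which is $R^{O(1)}$) gives total time $\fexp{m+n}{1}$ as well; generating the canonically normalized $u$'s can be done by iterating over the position of the leading $1$ and the free entries after it.

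The only thing requiring a little care — and the closest thing to an obstacle — is arguing that distinct representatives $(u, v)$ yield distinct matrices and that every rank-$1$ matrix is hit exactly once, so that the count is an equality and not just an upper bound; but this is immediate from the uniqueness clause of Lemma \ref{fullrank}, since the canonical normalization of $u$ picks out exactly one point in each $\F^\times$-orbit of factorizations. Everything else is routine bookkeeping, so I would keep the write-up short.
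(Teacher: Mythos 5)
Your proposal is correct and matches the paper's proof essentially verbatim: handle $M=O$ separately, then enumerate $M=u\times v$ with $u$ canonically normalized and $v$ nonzero, giving $\frac{|\F|^m-1}{|\F|-1}\paren{|\F|^n-1}\in\fexp{m+n}{1}$ possibilities and enumeration time. Your explicit appeal to Lemma \ref{fullrank} for the uniqueness of the normalized factorization is exactly the justification the paper itself notes after its proof.
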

\begin{proof}
WLOG $\rank{M}=1$, otherwise $M=O$ (only one possibility).
Then there is exactly one way of expressing $M$ as $u\times v$ with the requirement that $u$ is canonically normalized, so we can iterate over all such $u\in\F^m$ and all nonzero $v\in\F^n$ in $\frac{|\F|^m-1}{|\F|-1}\paren{|\F|^n-1}=\fexp{m+n}{1}$ time.
% By Lemma \ref{fullrank}, $M$ has $|\GL{1}{\F}|=|\F|-1$ many rank-1 factorizations $(U\in\F^{m\times 1},\ V\in\F^{1\times n})$. Since different matrices cannot have equal factorizations, there are $\le 1+\frac{|\F|^{m+n}}{|\F|-1}=\fexp{m+n}{1}$ many distinct $M$.
\end{proof}

Note that the bound on the number of such $M$ is a corollary of Lemma \ref{fullrank}.

\subsubsection{Rank-$r$ factorization summands of a rank-$r$ matrix}
\label{fullrank-enum-proof}
We now prove Lemma \ref{fullrank-enum}.
Using the same matrices as defined in Lemma \ref{fullrank}, we have that for any $i$, $C_{:,i}F_{i,:}=C_0 \paren{X_{:,i} (X^{-1})_{i,:}} F_0$. Since the paranthesized part is a $r\times r$ matrix of rank $\le 1$, by Lemma \ref{rank1-enum} the whole expression has $\fexp{2r}{1}$ many possibilities.

\subsubsection{Rank-$r$ factorization summands of a rank-$(r-1)$ matrix}
\label{add1rank-enum-proof}
To prove Lemma \ref{add1rank-enum}, we further derive a rank bound on the factors of a matrix factorization.

\begin{lemma}
\label{rankfac-ineq}
Let $(U,\ V)$ be a rank-$r$ factorization of $M\in\F^{m\times n}$ and $q:=\rank{M}$. Then $\rank{U}+\rank{V}\le r+q$.
\end{lemma}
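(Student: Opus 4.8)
The plan is to exploit the subadditivity and other standard inequalities for matrix rank, pushing them against the fact that $M=UV$ already has small rank $q$. Write $U\in\F^{m\times r}$ and $V\in\F^{r\times n}$, and set $a:=\rank{U}$, $b:=\rank{V}$. Since $r$ is the common inner dimension, both $a\le r$ and $b\le r$, and if either equals $r$ the claimed inequality $a+b\le r+q$ is immediate from the trivial bound $\rank{M}=q\le \min\{a,b\}$ only if $q$ happens to be large; so the genuine content is when both $U$ and $V$ are rank-deficient. The cleanest route is via the nullspace (or column space) dimension count rather than trying to manipulate $UV$ directly.

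First I would factor $U$ through its own column space: pick a full-rank factorization $U=PQ$ with $P\in\F^{m\times a}$, $Q\in\F^{a\times r}$, both of rank $a$. Then $M=P(QV)$, so $\rowspan{M}\subseteq\rowspan{QV}$, giving $q=\rank{M}=\rank{QV}$ once we note $\rank{QV}\le\rank{M}$ also holds because $P$ has full column rank $a$ (left-multiplication by a matrix with trivial kernel preserves rank). Hence $\rank{QV}=q$ exactly. Now $QV$ is an $a\times n$ matrix, and $Q$ is $a\times r$ of rank $a$, so $Q$ is surjective as a map $\F^r\to\F^a$; the key step is the Sylvester-type inequality $\rank{QV}\ge \rank{Q}+\rank{V}-r = a+b-r$. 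Combining, $q=\rank{QV}\ge a+b-r$, which rearranges to exactly $\rank{U}+\rank{V}=a+b\le r+q$.

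The one step that deserves care — and the main obstacle — is justifying the Sylvester rank inequality $\rank{QV}\ge \rank{Q}+\rank{V}-r$ in a self-contained way over an arbitrary field $\F$, since the excerpt has not stated it. I would prove it by a dimension count on kernels: $\nullspace{QV}$ has dimension $n-\rank{QV}$, and any vector killed by $QV$ has $Vx$ lying in $\nullspace{Q}$, so $\dim{\nullspace{QV}}\le \dim{\nullspace{V}} + \dim{\nullspace{Q}\cap \clmspan{V}} \le (n-b) + (r-\rank{Q})$, where the last bound uses $\dim{\nullspace{Q}}=r-\rank{Q}=r-a$. Rearranging $n-\rank{QV}\le (n-b)+(r-a)$ gives $\rank{QV}\ge a+b-r$, as needed. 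With that inequality in hand the rest is the two-line rank bookkeeping above, so the proof is short; the only real decision is whether to invoke Sylvester as "well known" or to include this four-line kernel argument, and for a paper that is otherwise careful about self-containedness I would include it.
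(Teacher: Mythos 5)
Your proof is correct, but it reaches the bound by a slightly different (essentially dual) route from the paper. What you prove is exactly Sylvester's rank inequality $\rank{UV}\ge\rank{U}+\rank{V}-r$ for an $m\times r$ times $r\times n$ product, established by a right-kernel dimension count: $\dim{\nullspace{QV}}\le\dim{\nullspace{V}}+\dim{\nullspace{Q}\cap\clmspan{V}}$, then rank--nullity. The paper instead row-reduces $M$ itself (via Lemma \ref{fast-rankfac}) to get an invertible $F$ with $(FU)_{q:,:}V=O$, observes that the last $m-q$ rows of $FU$ lie in the left nullspace of $V$, whose dimension is $r-\rank{V}$, and concludes $\rank{U}=\rank{FU}\le q+r-\rank{V}$; that is the same rank--nullity count carried out on the left nullspace of $V$ rather than on right kernels. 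Two remarks on your write-up: the detour through the full-rank factorization $U=PQ$ is unnecessary, since your kernel argument applies verbatim to $UV$ with inner dimension $r$ (kernel of the restriction of $V$ to $\nullspace{UV}$, image inside $\nullspace{U}$), which would shorten the proof; and the opening remark that the case $\rank{U}=r$ or $\rank{V}=r$ is ``immediate from $q\le\min\{a,b\}$'' is muddled (that trivial bound alone does not give it --- you need, e.g., that a full-column-rank $U$ forces $q=\rank{V}$), but this is harmless since your main argument covers all cases uniformly. What your approach buys is that the lemma is recognized as a classical, citable fact with a self-contained four-line proof independent of the paper's row-reduction machinery; what the paper's version buys is that it reuses Lemma \ref{fast-rankfac}, which is already needed elsewhere in the algorithm.
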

\begin{proof}
By Lemma \ref{fast-rankfac}, $\exists F\in\GL{m}{\F}$ s.t. $FM=\rref{M}$, so $(FM)_{q:,:}=(FU)_{q:,:}V=O$.
Transposing this equation shows that each column of $((FU)_{q:,:})^\intercal$ is in $\nullspace{V^\intercal}$, so $\rank{((FU)_{q:,:})^\intercal}=\rank{(FU)_{q:,:}}=\dim{\nullspace{V^\intercal}}$.
By the rank-nullity theorem, $\dim{\nullspace{V^\intercal}}=r-\rank{V^\intercal}=r-\rank{V}$.
Adding back the first $q$ rows of $FU$, we have that $\rank{FU}\le r+q-\rank{V}$. Since left- (or right-) multiplying by an invertible matrix does not change matrix rank, $\rank{U}+\rank{V}\le r+q$.
\end{proof}

When $q=r-1$, we get a stronger result:

\begin{lemma}
Let $(C,\ F)$ be a rank-$r$ factorization of $M\in\F^{m\times n}$.
Suppose $\rank{M}=r-1$, and let $(C_0,\ F_0)$ be a rank-$(r-1)$ factorization of $M$.
Then $(\exists G\in\F^{(r-1)\times r} \textrm{ s.t. } C=C_0G)$ or $(\exists H\in\F^{r\times (r-1)} \textrm{ s.t. } F=HF_0)$.
\end{lemma}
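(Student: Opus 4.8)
The plan is to use the rank bound from Lemma \ref{rankfac-ineq} applied to the factorization $(C, F)$ of $M$, together with Lemma \ref{fullrank} to identify what the ``full-rank part'' of each factor must look like. Since $\rank{M} = q = r-1$ and $(C,F)$ is a rank-$r$ factorization, Lemma \ref{rankfac-ineq} gives $\rank{C} + \rank{F} \le r + (r-1) = 2r - 1$. Because $C \in \F^{m\times r}$ and $F \in \F^{r\times n}$, each of $\rank{C}$ and $\rank{F}$ is at most $r$; combined with $CF = M$ having rank $r-1$, each is at least $r-1$. So the pair $(\rank{C}, \rank{F})$ is one of $(r-1, r-1)$, $(r-1, r)$, or $(r, r-1)$. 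The key observation is that $\rank{C} + \rank{F} \le 2r-1$ forces at least one of $\rank{C} = r-1$ or $\rank{F} = r-1$; I will handle these two cases symmetrically (by transposing), so it suffices to treat $\rank{F} = r-1$.

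Now suppose $\rank{F} = r - 1$. I claim $\rowspan{F} = \rowspan{M} = \rowspan{F_0}$: indeed $\rowspan{M} = \rowspan{CF} \subseteq \rowspan{F}$, and both have dimension $r-1$, so they are equal; and $\rowspan{F_0} = \rowspan{M}$ since $(C_0, F_0)$ is a full-rank (hence rank-$(r-1)$) factorization of $M$. Since $F_0$ has full row rank $r-1$, its rows form a basis of this common rowspan, so every row of $F$ is a unique linear combination of the rows of $F_0$; collecting these coefficients gives a matrix $H \in \F^{r \times (r-1)}$ with $F = H F_0$, which is exactly the second alternative. In the symmetric case $\rank{C} = r-1$, I apply the same argument to $C^\intercal$ (using that $(F_0^\intercal, C_0^\intercal)$ is a rank-$(r-1)$ factorization of $M^\intercal$, so $\clmspan{C} = \clmspan{M} = \clmspan{C_0}$ and $C_0$ has full column rank), obtaining $G \in \F^{(r-1)\times r}$ with $C = C_0 G$, the first alternative. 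Since at least one case always holds, the disjunction follows.

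I do not expect a serious obstacle here; the main subtlety is just making sure the rank trichotomy genuinely forces one of the two factors down to rank $r-1$ — this is immediate from $\rank{C} + \rank{F} \le 2r-1$ and both being $\ge r-1$ — and being careful that the spans match up so that $F_0$ (resp.\ $C_0$) really is a basis of the relevant space. One small point to get right is that $(C_0, F_0)$ being a rank-$(r-1)$ factorization of $M$ with $\rank{M} = r-1$ makes it a \emph{full-rank} factorization, so $F_0$ has linearly independent rows and $C_0$ has linearly independent columns; this is what licenses the unique-coefficient extraction of $H$ (resp.\ $G$).
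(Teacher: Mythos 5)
Your proof is correct, and it diverges from the paper's in the second half. Both arguments open identically: apply Lemma \ref{rankfac-ineq} to get $\rank{C}+\rank{F}\le 2r-1$, note that $\rank{C},\rank{F}\ge\rank{M}=r-1$, and conclude that at least one factor has rank exactly $r-1$. From there the paper takes the factorization-rigidity route: in the case $\rank{C}=r-1$ it writes a rank-$(r-1)$ factorization $C=C'A$, observes that $(C',AF)$ is then a rank-$(r-1)$ (hence full-rank) factorization of $M$, and invokes Lemma \ref{fullrank} to get $C'=C_0X$ for some $X\in\GL{r-1}{\F}$, whence $C=C_0(XA)$. You instead argue directly with spans: if $\rank{F}=r-1$ then $\rowspan{F}=\rowspan{M}=\rowspan{F_0}$ by a dimension count, and since $F_0$ has full row rank its rows form a basis of that space, so every row of $F$ is a combination of rows of $F_0$, giving $F=HF_0$; the case $\rank{C}=r-1$ is the transposed statement with column spans and $C=C_0G$. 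Your route is more elementary — it never needs the uniqueness-up-to-$\GL{r-1}{\F}$ statement of Lemma \ref{fullrank}, only the observation (which you correctly justify) that $\rank{M}=r-1$ forces $C_0$ to have independent columns and $F_0$ independent rows, plus containment and equality of spans; indeed even the uniqueness of the coefficients you mention is not needed, only existence. The paper's route reuses machinery it has already built (and which it needs elsewhere, e.g.\ for Lemma \ref{fullrank-enum}), so neither is wasteful in context, but as a standalone proof of this lemma yours is leaner. One cosmetic point: the lemma's two alternatives are tied to which factor drops rank ($\rank{C}=r-1$ yields $C=C_0G$, $\rank{F}=r-1$ yields $F=HF_0$), and your case split respects this, so the disjunction is established exactly as stated.
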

\begin{proof}
By Lemma \ref{rankfac-ineq}, $\rank{C}+\rank{F}\le 2r-1$. Since $\rank{C}\ge r-1$ and $\rank{F}\ge r-1$, we must have $\rank{C}=r-1$ or $\rank{F}=r-1$. 

If $\rank{C}=r-1$, then there exists a rank-$(r-1)$ factorization $(C',\ A)$ of $C$. Then $(C',\ AF)$ is a rank-$(r-1)$ factorization of $M$, so by Lemma \ref{fullrank} $\exists X\in\GL{r-1}{\F}$ s.t. $(C',\ AF)=(C_0X,\ X^{-1}F_0)$, and thus $C=C_0 (XA)$.

Otherwise, $\rank{F}=r-1$, and a similar result can be obtained for $F$ by transposing the previous argument.
\end{proof}

If such $G$ exists, then for any $i$, $C_{:,i}F_{i,:}=C_0(G_{:,i}F_{i,:})$; otherwise, $C_{:,i}F_{i,:}=(C_{:,i}H_{i,:})F_0$. In each case, the paranthesized part is a matrix of rank $\le 1$, with shape $(r-1)\times n$ and $m\times (r-1)$ respectively. By Lemma \ref{rank1-enum}, $C_{:,i}F_{i,:}$ has a total of $\fexp{m+r-1}{1}+\fexp{n+r-1}{1}$ many possibilities.

Note that when $q=r-2$, we can no longer upper-bound the number of possibilities of $C_{:,i}F_{i,:}$ for any individual $i$ beyond the na\"ive bound of $\fexp{m+n}{1}$, since if $(U,\ V)$ is a full-rank factorization of $M$, then $\paren{\MA{c|c|c}{U&u&u},\ \MA{c}{V\\\hline v\\\hline -v}}$ is a rank-$r$ factorization of $M$ for arbitrary vectors $u\in\F^{m\times 1},\ v\in\F^{1\times n}$. However, it may still be possible to give a nontrivial bound on the number of possibilities of lists $\bracket{C_{:,i}F_{i,:}}_{0\le i<\ell}$ w.r.t. $\ell$ for $\ell>1$.

\end{appendices}
\end{document}